\newcommand{\setstretchvalue}{1.35}
\newtheorem{theorem}{Theorem}
\newtheorem{corollary}{Corollary}
\newtheorem{lemma}{Lemma}
\newtheorem{definition}{Definition}
\author{
  Danny Hermelin \\
  \emph{Ben-Gurion University of the Negev} \\
  \texttt{hermelin@bgu.ac.il}
  \and 
  Dvir Shabtay \\
  \emph{Ben-Gurion University of the Negev} \\
  \texttt{dvirs@bgu.ac.il}
  \and 
  Nimrod Talmon \\
  \emph{Weizmann Institute of Science} \\
  \texttt{nimrodtalmon77@gmail.com}
  }
\begin{document}


\title{On The Parameterized Tractability of the Just-In-Time Flow-Shop Scheduling Problem}

\maketitle

\begin{abstract}
Since its development in the early 90's, parameterized complexity has been widely used to analyze the tractability of many NP-hard combinatorial optimization problems with respect to various types of problem parameters. While the generic nature of the framework allows the analysis of any combinatorial problem, the main focus along the years was on analyzing graph problems. In this paper we diverge from this trend by studying the parameterized complexity of Just-In-Time (JIT) flow-shop scheduling problems. Our analysis focuses on the case where the number of due dates is considerably smaller than the number of jobs, and can thus be considered as a parameter. We prove that the two-machine problem is W[1]-hard with respect to this parameter, even if all processing times on the second machine are of unit length, while the problem is in XP even for a parameterized number of machines. We then move on to study the tractability of the problem when combining the different number of due dates with either the number of different weights or the number of different processing times on the first machine. We prove that in both cases the problem is fixed-parameter tractable for the two machine case, and is W[1]-hard for three or more machines.
\end{abstract}




\section{Introduction}

The concept of just-in-time (JIT) production have attracted the attention of many industries in the last 50 years and has been wieldy adopted over the years to improve production efficiency in many industries (see, e.g., White and Prybutok~\cite{White} and Fullerton and McWatters~\cite{Fullerton}). One of the main concepts in JIT production systems is to make sure that customer orders (jobs to be produced) are completed exactly when required, avoiding unnecessary inventory and late delivery costs. Therefore, in the field of JIT scheduling, the objective to be minimized include penalties for both early and tardy completion of jobs. Two main types of such a cost functions are described in the literature. In the first, there is a penalty for each job that is proportional to the deviation of its completion time from the required due date. In the second, however, jobs that are not completed exactly at the due date incur a penalty which is independent of the deviation from the required due-date. Scheduling problems with a cost function of the first type are commonly known as \emph{earliness-tardiness scheduling problems} (see Baker and Scudder~\cite{Baker1990} for a survey), while problems with a cost function of the second type are commonly known as \emph{JIT scheduling problems} (see Shabtay and Steiner~\cite{DSGS2012} for a survey). It should be noted that the set of JIT scheduling problems on a single and parallel machines forms a special case of another important set of scheduling problems, which is the set of fixed interval scheduling problems (see Kovalyov \emph{et al.}~\cite{Kov2007} for a survey).

The JIT scheduling problem is solvable in polynomial time on a single machine (see Lann and Mosheiov~\cite{LannMosheiov}), and on various parallel machine systems (see, e.g., Arkin and Silverberg~\cite{ArkinSilverberg87}, Carlisle and Lloyd~\cite{Carlisle95} and {\v C}epek and Sung~\cite{Sung2005}). However, this is not the case when we move to more sophisticated scheduling systems such as flow-shop, job-shop and open-shop (see Choi and Yoon~\cite{Choi2007} and Shabtay and Bensoussan~\cite{DBLP:journals/scheduling/ShabtayB12}). The NP-hardness proof of the JIT scheduling problem on the later machine systems heavily depends on the fact that several parameters of the problem (such as the number of different processing times and the number of different due dates) may be arbitrary large in theory.

Such an assumption, however, may not be valid in many real-life scheduling problems. For example, in many cases the manufacturer produces only a predefined set of different products, yielding instances with limited number of different processing times. As another example, in many production systems the manufacturer limit the number of due dates such that each fits to one out of a predefined set of delivery dates, resulting instances with only limited number of different due dates. Therefore, it is quite natural to ask, if the NP-hard variants of the JIT scheduling problem becomes tractable when some of their natural parameters are of a limited size.

In this paper we study the parameterized tractability of the JIT scheduling problem in a flow-shop scheduling system. We do so by using the theory of parameterized complexity, which has been developed in the early 90's by the computer science community. The main idea in parameterized complexity is to analyze the tractability of NP-hard problems with respect to (wrt.) various instance parameters that may be independent of the total input length. Although the area of parameterized complexity has enjoyed tremendous success in many fields of combinatorial optimization since its development in the early 90's (see, e.g.,~\cite{Cygan2015,FG98,DF99,N06}), it is rarely used to analyze hard scheduling problems. In fact, we are aware of only handful number of papers which provide a parameterized analysis of scheduling
problems~\cite{BodlaenderandFellows1995,FellowsandMcCartin2003,hermelin2015scheduling,MnichandWiese2013,journals/scheduling/BevernMNW15,journals/corr/BevernNS15}.

Below, we continue the introduction by providing a brief exposition to the theory of parameterized complexity. Then, we formally define the scheduling problem we aim to analyze, survey the known results in the literature that are related to this problem, and present our research objectives.

\subsection{Basic Concepts in Parameterized Complexity Theory}

The main objective in parameterized complexity theory is to analyze the tractability of NP-hard problems wrt.\ their natural parameters.
Consider an NP-hard problem $\pi$ and let $n$ denotes its instance size.

\begin{definition}
Problem $\pi$ is \emph{fixed-parameter tractable (FPT)} wrt.\ some parameter $k$ if there is an algorithm that solves any instance of $\pi$ in $f(k)n^{o(1)}$ time,
for some computable function $f$ that solely depends on $k$.
An algorithm running in this running time is said to be a \emph{fixed-parameter algorithm}.
\end{definition}

\begin{definition}
Problem $\pi$ belongs to the \emph{XP set}, wrt.\ some parameter $k$ if there is an algorithm that solves any instance of $\pi$ in $n^{f(k)}$ time, for some computable function $f$ that solely depends on $k$.
An algorithm running in this running time is said to be an \emph{XP algorithm}.
\end{definition}

Note that a fixed-parameter algorithm is capable of solving instances where $k$ is fixed
(or upper bounded by a constant)
in polynomial time,
where,
importantly,
the degree of the polynomial does not depend on the value of the parameter $k$.
The main advantage of a fixed-parameter algorithm,
compared to an XP algorithm with time complexity $O(n^{k})$,
is that it separates the dependency between the instance size and the size of the parameter in the running time calculation,
making a fixed-parameter algorithm,
asymptotically speaking,
much more efficient.
There are parameterized analogues of NP-hardness which can be used to show that a problem is presumably not fixed-parameter tractable.
Similarly to the reductions used to show NP-hardness of some (non-parameterized) problem,
the standard technique for showing parameterized hardness is through a parameterized reduction.

\begin{definition}
A parameterized reduction from a problem $\pi_2$ to a problem $\pi_1$ is an algorithm mapping an instance $I_2$ of $\pi_2$ with parameter $k_2$
to an instance $I_1$ of $\pi_1$ with a parameter $k_1$ in time $f(k_2)|I_2|^{O(1)}$ such that $k_1\leq f(k_2)$ and $I_1$
is a yes-instance for $\pi_1$ if and only if $I_2$ is a yes-instance for $\pi_2$.
\end{definition}

There are several classes of problems that are conjectured not to be fixed-parameter tractable,
with the class of W[1]-hard problems and the class of W[2]-hard problems being the most popular.
Indeed,
for $i=1,2$,
a decision problem $\pi$ is W[i]-hard wrt.\ parameter $k$ if $\pi$ being fixed-parameter tractable wrt.\ $k$ infers that
all problems in W[i] are fixed-parameter tractable as well.

For proving that a problem is W[1]-hard one can provide a parameterized reduction from a known W[1]-hard problem,
such as the \textsc{Clique} problem, with $k$ being the size of the clique.
Similarly, for proving that a problem is W[2]-hard one can provide a parameterized reduction from a known W[2]-hard problem,
such as the \textsc{Set Cover} problem, with $k$ being the number of sets in the cover.
(while W[1] and W[2] are, formally speaking, different complexity classes, they both presumably rule out the existence of a fixed-parameter algorithm and for our point of view, similarly to the point of view of most papers studying parameterized complexity, the difference between them is of no special significance.)

\subsection{Problem Definition}

The aim of the current paper is to study the JIT flow-shop scheduling problem from a parameterized complexity perspective;
the problem can be defined as follows.
We are given a set of $n$ independent, non-preemptive jobs, $\mathcal{J}=\{J_1,J_2,\ldots,J_n\}$,
which are available for processing at time zero and are to be scheduled on a set of $m$ machines, $\mathcal{M}=\{M_1,M_2,\ldots,M_m\}$.
The machines are arranged in a flow-shop machine setting;
in such a setting all jobs are to be processed on all the machines and each job has to follow the same route trough the machines,
i.e.,
all jobs have to be processed first on $M_1$, then on $M_2$, and so on,
in a topological order up to $M_m$.

For $i=1,\ldots,m$ and $j=1,\ldots,n$, let $O_{ij}$ represents the processing operation of job $J_{j}$ on machine $M_{i}$ and let $p^{(i)}_{j}$ represents its processing time. We use $d_{j}$ to represent the due date of job $J_{j}$ and $w_{j}$ to represent the gain (income) of completing job $J_{j}$ in a JIT mode
(i.e., \emph{exactly} at time $d_{j}$).
We assume that all the $d_{j}$, $w_{j}$, and the $p^{(i)}_{j}$ values, are positive integers.

For a JIT scheduling problem,
a partition of the set $J$ into two disjointed subsets $E$ and $T$ is considered to be a feasible partition
(or a feasible schedule),
if it is possible to schedule jobs belonging to the set $E$ such that they are all completed in a JIT mode.
In a feasible schedule, without loss of generality, jobs belonging to the set $T$ are assumed to be rejected.
The objective is to find a feasible schedule with a maximal weighted number of jobs in the set $E$;
that is, to maximize $\sum_{J_{j}\in E} w_{j}$.

Following the classical three-field notation of~\cite{Graham79}, we denote our JIT flow-shop scheduling problem by $F_m || \sum_{J_{j}\in E} w_{j}$,
where $F_m$ in the first field indicates that the scheduling is done in a flow-shop scheduling system with $m$ machines
and the objective to be maximized appears in the last field.
Note that the middle (second) field is empty,
while in general it is used to specify specific processing characteristics and constraints.
By $F_m || |E|$ we refer to the special case where weights are identical,
and the problem is simply to maximize the number of JIT jobs.

\subsection{Related Work}

The first to study the $F_m || \sum_{J_{j}\in E} w_{j}$ problem were Choi and Yoon~\cite{Choi2007}.
By a reduction from the \emph{Partition} problem, they proved that the two-machine case is NP-hard.
They also show that the unweighted version of the problem is solvable in polynomial time (specifically, $O(n^4)$ time)
for two machines and that it is strongly NP-hard for three machines.
They left, however, an open question concerning whether the $F_2 || \sum_{J_{j}\in E} w_{j}$ problem is strongly NP-hard or ordinary NP-hard.
Shabtay and Bensoussan~\cite{DBLP:journals/scheduling/ShabtayB12} resolved this question by providing a pseudo-polynomial time algorithm for its solution
(meaning that the problem on two machines is ordinary NP-hard).
Shabtay and Bensoussan also show how the pseudo-polynomial time algorithm can be converted into a fully polynomial time approximation scheme
(FPTAS),
which provides $(1 + \varepsilon)$ approximation in $O((n^4/\varepsilon) \log(n^2/\varepsilon))$ time.
The latter result was later improved by Elalouf et al.~\cite{DBLP:journals/scheduling/ElaloufLT13},
who accelerated the running time of the FPTAS by a factor of $n\log (n/\varepsilon)$.

Shabtay~\cite{DBLP:journals/eor/Shabtay12} provided an $O(n^3)$ time algorithm for solving the $F_2 || |E|$ problem,
improving an earlier result of Choi and Yoon~\cite{Choi2007} by a factor of $n$.
Shabtay further showed that the problem on $m$ machines with machine-independent processing times is ordinary NP-hard in general and is solvable in $O(n^4)$ time
as long as all weights are identical.
If, on the other hand, the processing times are job-independent (but machine-dependent),
then the $m$ machine problem is solvable in polynomial time (specifically, $O(n^3)$).
This time complexity was pushed down to $O(n\log n)$ if in addition to job-independent processing times, it holds that all weights are identical.
Finally, Shabtay~\cite{DBLP:journals/eor/Shabtay12} show that the $m$ machine problem is solvable in $O(mn^2)$ time if there is a no-wait restriction,
i.e.,
if jobs are not allowed to wait between machines.

\subsection{Research Objective}

Our main research objective is to study the tractability of the hard variations of the $F_m || \sum_{J_{j}\in E} w_{j}$ problem
wrt.\ the number of different due dates, $\#d$.
In Section~\ref{section: prop} we provide some important properties of an optimal schedule.
In Section~\ref{section: pre} we prove that the two-machine problem is W[1]-hard wrt.\ this parameter,
even if all processing times in the second machine are of unit size.
We also prove that the more general $m$-machine problem belongs to XP,
for a parameterized number of machines.

We further continue to consider combining the number $\#d$ of different due dates (as a parameter) with two other parameters:
  the number of different processing times on the first machine ($\#p^1$)
  and
  the number of different weights ($\#w$).
In Section~\ref{section: pre2} we prove,
for both of these combined cases,
that the two-machine problem is fixed-parameter tractable;
and in Section \ref{section: pre3} we prove that the corresponding three-machine problem is W[1]-hard.
A summary and future research section concludes our paper (our results are also summarized in Table~\ref{table:results} below).

\section{General Properties}\label{section: prop}

Let $\sigma_i$ be the processing order of the set of JIT jobs (i.e., the set $E$)
on the machine $M_i$, for $i=1,\ldots,m$.
A job schedule is called a \emph{permutation schedule} if $\sigma_1=\sigma_2=\ldots=\sigma_m$.
The following two lemmas are due to Choi and Yoon~\cite{Choi2007}.

\begin{lemma}\label{lemma:Choi1}
  There exists an optimal schedule for the \textsc{$F_m||\sum_{J_{j} \in E} w_j$} problem in which $\sigma_1=\sigma_2$.
\end{lemma}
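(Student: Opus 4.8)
The plan is to prove that in a JIT flow-shop schedule, jobs completed just-in-time on the first two machines can always be reordered to share a common processing order. The key structural fact is that each JIT job $J_j$ in the set $E$ must complete on the last machine $M_m$ exactly at its due date $d_j$, and since processing is non-preemptive and the machines are in series, this pins down a rigid timing relationship. First I would focus attention only on the two machines $M_1$ and $M_2$, since the lemma asserts $\sigma_1=\sigma_2$ and says nothing about the remaining machines. The approach is an exchange argument: assume we have an optimal (feasible) schedule and show that if $\sigma_1 \neq \sigma_2$, we can locally swap an adjacent inversion without destroying feasibility or changing the set $E$, hence without changing the objective value $\sum_{J_j \in E} w_j$.

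The core of the argument is to consider two jobs $J_a$ and $J_b$ in $E$ that appear in one relative order on $M_1$ and the opposite relative order on $M_2$; concretely, suppose $J_a$ immediately precedes $J_b$ on $M_2$ but $J_b$ precedes $J_a$ on $M_1$. I would then examine the completion times on $M_2$: feasibility demands that each job's operation $O_{2j}$ finishes so that the chain down to $M_m$ delivers $J_j$ exactly at $d_j$. I expect that the completion times of $J_a$ and $J_b$ on $M_2$, together with their required start times dictated by availability from $M_1$, leave enough slack to reverse the order of $J_b$ and $J_a$ on $M_1$ so that it matches $M_2$. The crucial observation is that if $J_a$ finishes before $J_b$ on $M_2$, then $J_a$'s $M_1$-operation $O_{1a}$ must have finished early enough that it could have been scheduled before $O_{1b}$ on $M_1$ as well, since $O_{1a}$ was already available no later than what $M_2$ required; swapping them on $M_1$ only moves $O_{1a}$ earlier and $O_{1b}$ later, which cannot violate the availability constraints feeding into $M_2$.

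The main obstacle I anticipate is verifying that the swap on $M_1$ genuinely preserves feasibility on $M_2$ and downstream: I must ensure that after reordering, the job moved later on $M_1$ still completes its $M_1$-operation in time to begin its $M_2$-operation at the fixed instant required by the rigid JIT timing on $M_2$. This requires carefully tracking that the machine-$M_1$ finish time of the job pushed back does not exceed its mandated machine-$M_2$ start time. The argument hinges on the inequality relating the processing times $p^{(1)}_a, p^{(1)}_b$ and the idle structure on $M_1$; I would argue that because the $M_2$ order already has $J_a$ before $J_b$, the earlier-starting slot on $M_2$ belongs to $J_a$, so there is no incentive for $J_b$ to precede $J_a$ on $M_1$, and the exchange is always safe. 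Finally, I would observe that each such adjacent swap strictly reduces the number of inversions between $\sigma_1$ and $\sigma_2$ while keeping $E$ and the objective fixed, so iterating the exchange terminates in a schedule with $\sigma_1 = \sigma_2$, completing the proof.
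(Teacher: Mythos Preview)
The paper does not supply its own proof of this lemma; it simply attributes the statement to Choi and Yoon and moves on. Your exchange argument is the standard route to this fact and is essentially correct in spirit.

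One technical wrinkle worth tightening: you choose the inverted pair $(J_a,J_b)$ to be \emph{adjacent on $M_2$} and then propose to swap them on $M_1$. But on $M_1$ the two jobs need not be adjacent, and with possibly different values of $p^{(1)}_a$ and $p^{(1)}_b$ and intermediate jobs between them, the ``swap'' on $M_1$ is messy to specify and to verify. The cleaner version is to pick an inverted pair that is adjacent on $M_1$ (such a pair always exists whenever $\sigma_1\neq\sigma_2$): say $J_b$ immediately precedes $J_a$ on $M_1$ while $J_a$ precedes $J_b$ on $M_2$. Interchanging them on $M_1$ keeps the combined $M_1$ interval intact, so no other $M_1$ job is disturbed; after the swap $J_a$ finishes on $M_1$ no later than before (hence still by its $M_2$ start time), and $J_b$ finishes on $M_1$ no later than the old completion time $C^{(1)}_a\le S^{(2)}_a\le S^{(2)}_b$. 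Thus feasibility on $M_2$ (and downstream) is preserved, $E$ and the objective are unchanged, and the inversion count between $\sigma_1$ and $\sigma_2$ strictly drops. With this adjustment your outline goes through exactly as you describe.
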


\begin{lemma}\label{lemma:Choi2}
 In an optimal schedule for the \textsc{$F_m||\sum_{J_{j} \in E} w_j$} problem,
 $\sigma_m$ follows the earliest due date (EDD) order.
\end{lemma}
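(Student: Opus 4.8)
The plan is to argue directly from the defining property of a JIT schedule, confining attention to the behaviour of the jobs on the last machine $M_m$. The key observation is that every job $J_j \in E$ completes its processing on $M_m$ \emph{exactly} at time $d_j$: since $M_m$ is the terminal machine of the flow-shop, a job's completion time on $M_m$ is its overall completion time, which must equal $d_j$ by the definition of the JIT mode. Hence, on $M_m$, each JIT job $J_j$ occupies precisely the time interval $[d_j - p^{(m)}_j,\, d_j]$.

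First I would note that, because $M_m$ is a single machine processing jobs non-preemptively, the processing intervals on $M_m$ of any two distinct jobs of $E$ must be pairwise disjoint. I would then take any two jobs $J_a, J_b \in E$ with $d_a < d_b$ and show that $J_a$ necessarily precedes $J_b$ in $\sigma_m$: their intervals $[d_a - p^{(m)}_a,\, d_a]$ and $[d_b - p^{(m)}_b,\, d_b]$ are disjoint and the former ends at $d_a < d_b$, so disjointness forces $d_b - p^{(m)}_b \ge d_a$; that is, $J_b$ cannot start on $M_m$ until after $J_a$ has already completed. Thus $J_a$ is processed entirely before $J_b$ on $M_m$.

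Since this precedence holds for every pair of JIT jobs ordered by due date, the processing order $\sigma_m$ coincides with the non-decreasing order of the due dates, i.e.\ the EDD order, which is exactly the claim. A minor point to dispose of along the way is the case of ties: two jobs sharing an identical due date cannot both belong to $E$, as their completion intervals on $M_m$ would share the same right endpoint and hence overlap; so among jobs with a common due date at most one is JIT, and the EDD order on $E$ is unambiguous.

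I do not anticipate a genuine obstacle here, as the statement is an immediate structural consequence of the JIT requirement on the terminal machine; the only care needed is in formalizing the disjointness of the processing intervals and in handling the tie case so that the EDD order is well defined. Notably, the flow-shop routing and the schedules on machines $M_1, \ldots, M_{m-1}$ play no role in this particular argument — the entire reasoning is confined to $M_m$.
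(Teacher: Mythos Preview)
Your argument is correct. The paper does not supply its own proof of this lemma; it simply attributes the result to Choi and Yoon~\cite{Choi2007}, so there is no in-paper proof to compare against. Your direct interval-disjointness argument on the terminal machine is the natural one and is almost certainly what the original proof amounts to as well.
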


It follows from Lemma~\ref{lemma:Choi1} and Lemma~\ref{lemma:Choi2}
that there exists an optimal permutation schedule for the \textsc{$F_2||\sum_{J_{j} \in E} w_j$} problem
in which both $\sigma_1$ and $\sigma_2$ follow the EDD rule.
When $m\geq 3$,
however,
the optimal schedule is not necessarily a permutation schedule;
indeed, Choi and Yoon~\cite{Choi2007} provide a counterexample showing that there exists a set of instances for the \textsc{$F_3||\sum_{J_{j} \in E} w_j$} problem
in which the optimal permutation in $M_1$ and $M_2$ does not follow the EDD rule.

Let $\sigma_i=(J_{\sigma_i(1)},J_{\sigma_i(2)},\ldots,J_{\sigma_i(n)})$.
Then,
the following lemma holds.

\begin{lemma} \label{lemma:sh}
  There exists an optimal schedule in which jobs in machines \linebreak
  $M_1,\ldots,M_{m-1}$ are scheduled as soon as possible,
  i.e.,
  where job $J_{\sigma_i(j)}$ starts on $M_i$ at the time which is the maximum between
  ($i$) the completion of job $J_{\sigma_i(j-1)}$ on $M_i$;
  and
  ($ii$) the completion of job $J_{\sigma_i(j)}$ on $M_{i-1}$.
\end{lemma}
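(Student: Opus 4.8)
The plan is to argue by a left-shift (exchange) argument. I would start from an arbitrary optimal schedule $S$, with rejection set $T$, JIT set $E$, and machine orders $\sigma_1,\ldots,\sigma_m$ inherited from $S$. The only genuinely rigid constraint in the problem lives on the last machine: every $J_j \in E$ must complete on $M_m$ exactly at $d_j$, i.e.\ occupy the slot $[d_j - p^{(m)}_j,\, d_j]$ on $M_m$. I would therefore leave the entire $M_m$ schedule (both its order $\sigma_m$ and its slots) untouched, and re-time only the operations on $M_1,\ldots,M_{m-1}$, keeping their orders $\sigma_1,\ldots,\sigma_{m-1}$ fixed but starting each operation as early as the flow-shop precedence allows, exactly as prescribed in the statement. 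Call the resulting schedule $S'$. Since $E$ is unchanged, $S'$ has the same objective value as $S$, so it remains only to verify that $S'$ is feasible.

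Feasibility of $S'$ reduces to a single monotonicity claim: for every $i \leq m-1$ and every $J_j \in E$, the completion time of $J_j$ on $M_i$ in $S'$ is no later than its completion time on $M_i$ in $S$. Granting this, each $J_j$ finishes on $M_{m-1}$ in $S'$ no later than it did in $S$, where it already finished by $d_j - p^{(m)}_j$ (forced, because $S$ completes $J_j$ on $M_m$ at $d_j$). Hence in $S'$ the job becomes available at $M_m$ in time to begin at $d_j - p^{(m)}_j$, and the untouched $M_m$ slots---pairwise disjoint in $S$ and therefore still disjoint---can be kept verbatim; any slack created by an earlier finish on $M_{m-1}$ is simply absorbed as waiting time, which the (unrestricted) flow-shop permits. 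Thus $S'$ is a feasible optimal schedule in which $M_1,\ldots,M_{m-1}$ run as soon as possible.

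The technical heart is the monotonicity claim, which I would prove by a nested induction: an outer induction on the machine index $i$ and, for fixed $i$, an inner induction on the position of the job within $\sigma_i$. On $M_1$ the as-soon-as-possible completion times are the smallest compatible with the order $\sigma_1$, so they dominate those of $S$ position by position. For the step, the as-soon-as-possible completion of $J_{\sigma_i(k)}$ on $M_i$ equals $p^{(i)}_{\sigma_i(k)}$ plus the maximum of ($a$) the $S'$-completion of its predecessor $J_{\sigma_i(k-1)}$ on $M_i$ and ($b$) the $S'$-completion of $J_{\sigma_i(k)}$ on $M_{i-1}$; the inner hypothesis bounds ($a$) and the outer hypothesis bounds ($b$) by their $S$-counterparts, while feasibility of $S$ forces its completion of $J_{\sigma_i(k)}$ on $M_i$ to exceed the same maximum, taken in $S$, by at least $p^{(i)}_{\sigma_i(k)}$. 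Chaining these inequalities gives the claim. The only point requiring care is that for $m \geq 3$ the orders on consecutive machines need not coincide (cf.\ the Choi--Yoon counterexample mentioned above); this is harmless, since the outer hypothesis is quantified over all jobs and so applies to $J_{\sigma_i(k)}$ irrespective of its position in $\sigma_{i-1}$.
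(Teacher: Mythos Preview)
Your argument is correct: the left-shift construction, together with the nested induction establishing that as-soon-as-possible completion times on $M_1,\ldots,M_{m-1}$ are pointwise no later than in the original schedule, is exactly the standard way to justify this kind of claim, and your remark that the outer hypothesis is quantified over all jobs (so that differing permutations on consecutive machines cause no trouble) addresses the only subtlety.

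The paper, however, gives no proof at all of this lemma; it is simply asserted (presumably as folklore for flow-shop scheduling) and then used. So there is nothing to compare against---your write-up is strictly more than what the paper provides.
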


\begin{table}[t]
\centering
\setstretch{1.35}
\begin{tabular}{ c | c c}
                 & \textsc{$F_2||\sum_{J_{j} \in JIT} w_j$}   & \textsc{$F_3||\sum_{J_{j} \in JIT} w_j$}    \\ \hline
$\#d$            & XP (Theorem~\ref{theorem:twotwo})          & XP (Corollary~\ref{corollary:one})          \\
$\#d + \#w$      & FPT (Theorem~\ref{theorem:four})           & W[1]-hard (Theorem~\ref{theorem:five})      \\
$\#d + \#p^1$    & FPT (Theorem~\ref{theorem:three})          & W[1]-hard (Theorem~\ref{theorem:hard2})     \\
$\#d + \#p^2$    & \multicolumn{2}{c}{W[1]-hard (Theorem~\ref{theorem:one})}
\end{tabular}
\caption{Summary of our results.
The number of different due dates is denoted by $\#d$,
the number of different processing times in the first (second) machine is denoted by $\#p^1$ ($\#p^2$),
and the number of different weights is denoted by $\#w$.}
\label{table:results}
\end{table}

\section{The Complexity of JIT Flow-Shop wrt.\ $\#d$}\label{section: pre}

In this section we consider the parameterized complexity of the JIT flow-shop with respect to a single parameter,
namely the number $\#d$ of different due dates.

\subsection{The W[1]-Hardness of the Two-Machines Case}

Consider the following definition of the \textsc{$k$SUM} problem.

\begin{definition}[$k$SUM]\label{definition:ksum}
Given
a set of $h$ integers $X=\{x_1, \ldots, x_h\}$,
an integer $k$ ($1\leq k <h$),
and an integer $B$,
determine whether there exists a set $S\subseteq X$ (possibly with repetitions) such that $|S| = k$ and $\sum_{x_i\in S}x_i = B$.
\end{definition}

It is known that the \textsc{$k$SUM} is W[1]-hard wrt.\ $k$ (see~\cite{downey1992fixed}).
Below we prove that the \textsc{$F_2||\sum_{J_{j} \in E} w_j$} problem is W[1]-hard wrt.\ $\#d$
by providing a parameterized reduction from the \textsc{$k$SUM} problem.

\begin{theorem}\label{theorem:one}
  The \textsc{$F_2||\sum_{J_{j} \in E} w_j$} problem is W[1]-hard when parameterized by the number $\#d$ of different due dates
  even if all jobs have unit processing times on the second machine.
\end{theorem}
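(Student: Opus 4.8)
The plan is to give a parameterized reduction from \textsc{$k$SUM}, mapping an instance $(X=\{x_1,\dots,x_h\},k,B)$ to an instance of the \textsc{$F_2||\sum_{J_{j} \in E} w_j$} problem in which the number of distinct due dates is exactly $k$, so that the new parameter $\#d=k$ equals the \textsc{$k$SUM} parameter. The starting observation that drives the whole construction is that when all processing times on the last machine $M_2$ are unit, every JIT job occupies a unit-length slot on $M_2$ ending exactly at its due date; hence two jobs sharing a due date can never both be JIT, and a feasible schedule contains \emph{at most one} JIT job per distinct due date. This lets me treat the $k$ due dates as $k$ ``slots,'' each holding at most one selected job, exactly mirroring the choice of a size-$k$ multiset from $X$.

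First I would introduce $k$ due dates $d_1<\dots<d_k$ and, for every position $r\in\{1,\dots,k\}$ and every element $x_i\in X$, a job $J_{r,i}$ with due date $d_r$, unit processing time on $M_2$, first-machine processing time $p^{(1)}_{r,i}=1+x_i$, and weight $w_{r,i}=M+x_i$ for a large constant $M$ (taken larger than $k\cdot\max_i x_i$, hence polynomially bounded). Since all jobs of position $r$ share the due date $d_r$, at most one of them is selected, and choosing $J_{r,i}$ corresponds to placing $x_i$ in the $r$-th slot; repetitions across positions are permitted, matching the \textsc{$k$SUM} definition.

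Next I would analyze feasibility. By Lemmas~\ref{lemma:Choi1}--\ref{lemma:sh} I may assume the selected jobs are ordered by EDD on both machines and processed as soon as possible on $M_1$, so if the chosen elements in slot order are $x_{i_1},\dots,x_{i_t}$ their $M_1$-completion times are the cumulative sums $P_r=\sum_{s\le r}(1+x_{i_s})$, and JIT feasibility is precisely $P_r\le d_r-1$ for all $r$. Choosing $d_r=r+B+1$ makes the gap between consecutive due dates equal to $1$ (so the unit $M_2$-slots never overlap) and collapses every feasibility inequality to the single running condition $\sum_{s\le r}x_{i_s}\le B$, whose binding case at $r=k$ reads $\sum_s x_{i_s}\le B$. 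This is the delicate point, and I expect it to be the main obstacle: a naive tightly packed choice of due dates makes the \emph{intermediate} deadlines over-restrictive, and it is exactly the per-position offset of $1$ appearing in both $d_r$ and $p^{(1)}_{r,i}$ that cancels these out, leaving only the aggregate constraint on the total element sum.

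Finally I would close the equivalence through the weights. Asking whether a feasible schedule of total weight at least $kM+B$ exists forces, because $M>k\max_i x_i$, that exactly $k$ jobs (one per slot) are selected and that their element sum is at least $B$; combined with the feasibility bound $\sum_s x_{i_s}\le B$ this yields $\sum_s x_{i_s}=B$ over a size-$k$ multiset, i.e.\ a \textsc{$k$SUM} solution, while conversely any \textsc{$k$SUM} solution yields a feasible schedule of weight exactly $kM+B$. Since the construction runs in polynomial time, uses only unit processing times on $M_2$, and satisfies $\#d=k$, this establishes the claimed W[1]-hardness with respect to $\#d$.
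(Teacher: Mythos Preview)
Your reduction is correct and is genuinely different from the paper's. Both reduce from \textsc{$k$SUM} and use the ``one JIT job per due date'' observation to model the $k$ slots, and both use weights of the form $M+x_i$ to force the element sum to be at least $B$. The difference lies in how the upper bound $\sum x_{i_s}\le B$ is enforced. The paper adds an extra $(k{+}1)$-st job of enormous weight with $p^{(1)}=(k{+}1)T-B$ (where $T=\sum_i x_i$); since this job must be JIT and must be last on $M_1$, it caps the total first-machine load of the selected group-jobs at $B$. The group due dates are spaced far apart ($d_{ij}=iT$) so that the intermediate deadlines never bind. Your construction instead packs the due dates tightly ($d_r=r+B+1$) and pads each first-machine time by~$1$ ($p^{(1)}_{r,i}=1+x_i$); the two offsets cancel, so the feasibility constraint at slot $r$ becomes precisely $\sum_{s\le r}x_{i_s}\le B$, and with positive $x_i$ the last of these implies all the others. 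Your route avoids the blocker job entirely and achieves $\#d=k$ rather than $\#d=k{+}1$, at the price of the offset trick needing a short justification (which you give). Both approaches implicitly rely on the $x_i$ being positive integers---the paper does too, since it sets $p^{(1)}_{ij}=x_j$---so you may want to state this assumption explicitly and note that \textsc{$k$SUM} remains W[1]-hard in that regime.
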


\begin{proof}
Given an instance of \textsc{$k$SUM} we construct an instance for the decision version of the $F_2 || \sum_{J_{j}\in E} w_{j}$ problem as follows.
The set $\mathcal{J}$ includes $n=kh+1$ jobs.
The first $kh$ jobs are the union of $k$ sets $\mathcal{J}_1, \mathcal{J}_2,\ldots.,\mathcal{J}_k$,
where each set $\mathcal{J}_i$ ($i\in \{1,\ldots,k\}$) includes $h$ jobs.
Let $J_{ij}$ be the $j$th job in the set $\mathcal{J}_i$.
Moreover,
for any job $J_{ij}$,
let $d_{ij}$, $w_{ij}$, $p^{(1)}_{ij}$, and $p^{(2)}_{ij}$, be the due date, the weight, and the processing times on machines $M_1$ and $M_2$, respectively.

For any job $J_{ij}$,
we set
($i$) $d_{ij}=iT$, where $T=\sum_{x_i\in X}x_i$;
($ii$) $w_{ij}=T+x_j$;
($iii$) $p^{(1)}_{ij}=x_j$;
and
($iv$) $p^{(2)}_{ij}=1$.
For job $J_{kh+1}$,
we set
($i$) $d_{kh+1}=(k+1)T + 1$;
($ii$) $w_{kh+1}=k^2(T+1)^2$;
($iii$) $p^{(1)}_{kh+1}=(k+1)T - B$;
and
(iv) $p^{(2)}_{kh+1}=1$.
In our decision version of the $F_m || \sum_{J_{j}\in E} w_{j}$ problem
we ask whether there is a feasible schedule (partition) with $\sum_{J_{j}\in E} w_{j}\geq kT+B+k^2(T+1)^2$.
This finishes the description of the polynomial-time reduction
(a tabular representation of jobs created by the reduction is given in Table~\ref{table:reduction-1-general}).

\begin{table}[t]
\centering
\begin{tabular}{ c | c c c c}
            & $p^{(1)}$    & $p^{(2)}$    & $w$             & $d$            \\ \hline
$J_{11}$    & $x_1$        & $1$          & $T+x_1$         & $T$            \\
$\vdots$    & $\vdots$     & $\vdots$     & $\vdots$        & $\vdots$       \\
$J_{1h}$    & $x_h$        & $1$          & $T+x_h$         & $T$            \\ \hline
$\vdots$    & $\vdots$     & $\vdots$     & $\vdots$        & $\vdots$       \\ \hline
$J_{k1}$    & $x_1$        & $1$          & $T+x_1$         & $kT$           \\
$\vdots$    & $\vdots$     & $\vdots$     & $\vdots$        & $\vdots$       \\
$J_{kh}$    & $x_h$        & $1$          & $T+x_h$         & $kT$           \\ \hline
$J_{kh+1}$  & $(k+1)T - B$ & $1$          & $k^2(T+1)^2$    & $(k+1)T + 1$   \\
\end{tabular}
\caption{jobs created in the reduction described in the proof of Theorem~\ref{theorem:one}.}
\label{table:reduction-1-general}
\end{table}

Below we argue for the correctness of the reduction.
We begin by noticing that,
since all jobs in the set $\mathcal{J}_i$ ($i=1,\ldots,k$) share the same due date,
namely~$iT$,
it follows that,
in any feasible schedule,
at most one of them can be scheduled in a JIT mode.

Therefore, it holds that
\begin{equation}\label{eqn: cc1}
  |E|\leq k+1.
\end{equation}

Let us now prove that,
if we have a yes-instance for the \textsc{$k$SUM},
then there exists a feasible schedule for the constructed instance of the $F_2 || \sum_{J_{j}\in E} w_{j}$ problem
with $\sum_{J_{j}\in E} w_{j}\geq kT+B+k^2(T+1)^2$.
The fact that we have a yes-instance to \textsc{$k$SUM} implies that there exists a subset of $k$ elements
$S = \{x_{[1]},x_{[2]},\ldots,x_{[k]}\} \subseteq X$
such that $\sum_{j=1}^k x_{[j]} = B$,
where $[j]$ is the index of the $j$th element in $S$.
We then construct the following feasible schedule for the constructed instance of our scheduling problem.
We set $E=\{J_{1[1]},J_{2[2]},\ldots,J_{k[k]}\} \cup \{J_{kh+1}\}$ and $T=\mathcal{J} \backslash E$.
For $i=1,\ldots,k$, we then schedule job $J_{i[i]}$ during the time interval $(\sum_{j=1}^{i-1} x_{[j]}, \sum_{j=1}^{i} x_{[j]}]$ on $M_1$,
and during the time interval $(iT-1, iT]$ on $M_2$
(since $\sum_{j=1}^k x_{[j]} = B<T$, it follows that there is no overlap between the processing operations of the same job on both machines).
Moreover, we schedule job ${J_{kh+1}}$ during the time interval $(B, (k+1)T]$ on $M_1$
(which is just after the completion of job $J_{k[k]}$ on that machine)
and during the time interval $((k+1)T, (k+1)T+1]$ on $M_2$.
The constructed schedule is illustrated in Figure~\ref{figure:reduction-1}.
Since there is no overlap between the processing operations and all jobs in the set $E$,
and these jobs are scheduled in a JIT mode,
we have that the schedule is feasible;
further, it follows that
\begin{equation}\label{eqn: c1}
  \sum_{J_{j}\in E} w_{j}=\sum_{i=1}^{k} (T+x_{[i]})+k^2(T+1)^2=kT+B+k^2(T+1)^2;
\end{equation}
thus, the constructed instance of our scheduling problem is a yes-instance.

Now we prove that,
if we have a no-instance for the \textsc{$k$SUM},
then a feasible schedule for the constructed instance of the $F_2 || \sum_{J_{j}\in E} w_{j}$ problem
with $\sum_{J_{j}\in E} w_{j}\geq kT+B+T^2(k+1)^2$ does not exist.
By contradiction,
assume that a feasible partition (schedule) $\tau = E\cup T$ with $\sum_{J_{j}\in E} w_{j}\geq kT+B+T^2(k+1)^2$ exists.
The fact that at most a single job from each $\mathcal{J}_i$ can be scheduled in a JIT mode in any feasible schedule
implies that the total weight of the subset of JIT jobs among all jobs in $\cup_{i=1}^{k} \mathcal{J}_i$ is at most $kT+T=T(k+1)$.
Therefore, ${J_{kh+1}}$ is included in set the $E$
(as otherwise, we have that $\sum_{J_{j}\in E} w_{j} \leq T(k+1)<kT+B+T^2(k+1)^2$,
contradicting our assumption that $\sum_{J_{j}\in E} w_{j}\geq kT+B+T^2(k+1)^2$).
Let $E'=E \setminus \{J_{kh+1}\}$.
Since $\sum_{J_{j}\in E} w_{j}\geq kT+B+T^2(k+1)^2$ and ${J_{kh+1}}\in E$,
we conclude that
\begin{equation}\label{eqn: a1}
  \sum_{J_{j}\in E'} w_{j} = \sum_{J_{j}\in E} w_{j} - w_{kh+1} \geq kT+B.
\end{equation}

Let us now prove that $|E'|=k$.
The fact that $|E'|\leq k$ follows from eq.~(\ref{eqn: cc1}) above and the fact that $|E'|=|E|-1$.
Consider now a JIT set $E'$ with $|E'|<k$.
Then, we have that
\begin{equation}\label{eqn: a2}
  \sum_{J_{ij}\in E'} w_{ij}= \sum_{J_{ij}\in E'} (T+x_{j})=|E'|T+ \sum_{J_{ij}\in E'} x_{j} \leq (k-1)T+T\leq kT,
\end{equation}
which contradicts the relation in eq.~(\ref{eqn: a1}).
Thus, $|E'|=k$.
Based on eq.~(\ref{eqn: a1}),
we can now conclude that
\begin{equation}\label{eqn: a3}
  \sum_{J_{ij}\in E'} x_{j} \geq B.
\end{equation}

The fact that $J_{kh+1} \in E$ implies that this job is scheduled during the time interval $((k+1)T, (k+1)T+1]$ on $M_2$,
which in turn implies that it starts not later than in time point $B$ on $M_1$.
The fact that $p^{(1)}_{kh+1}=(k+1)T - B \geq kT$ and that any job in $E'$ has a due date which is not greater than $kT$
implies that all jobs in $E'$ shall be scheduled prior to job $J_{kh+1}$ on $M_1$.
Thus, we have that
\begin{equation}\label{eqn: a4}
  \sum_{J_{ij}\in E'} p_{1j}=\sum_{J_{ij}\in E'} x_{j}\leq B.
\end{equation}

Based on eq.~(\ref{eqn: a3}) and eq.~(\ref{eqn: a4})
we conclude that $\sum_{J_{j}\in E'} x_{j}= B$;
this means that,
by setting $S=\{x_j|J_{ij}\in E'\}$,
we can obtain a solution for the \textsc{$k$SUM} with $|S|=k$ and $\sum_{x_i\in S}x_i=B$.
\end{proof}

\begin{figure}[t]
  \centering
  \includegraphics[scale=0.31]{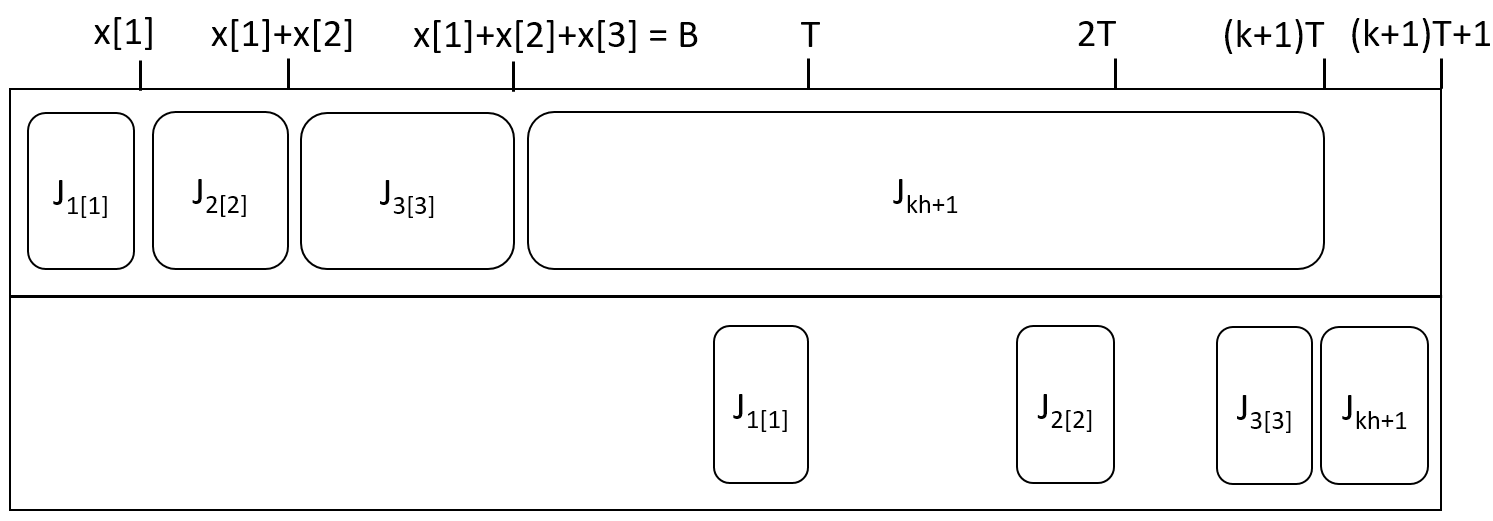}
\caption{An optimal schedule for a reduced instance as described in Theorem~\ref{theorem:one}, for $k = 3$.}
\label{figure:reduction-1}
\end{figure}

\subsection{An XP Algorithm for a Parameterized Number of Machines}

In this subsection we prove that the hardness result proven in Theorem~\ref{theorem:one} is tight,
by providing an XP algorithm for the \textsc{$F_m||\sum_{J_{j} \in E} w_j$} problem. We start by providing an XP algorithm for $m=3$ and then show how we can extend the result for a parameterized number of machines.

To do so, let us denote (without loss of generality) the different due dates by $d_1, \ldots, d_{\#d}$, such that $d_1 < d_2 < \ldots < d_{\#d}$.
Moreover, we say that job $J_j$ is of type $i$ if its due date is $d_i$,
for $i=1,\ldots,\#d$.

Importantly,
the fact that all jobs of the same type share the same due date implies that at most one of them can be scheduled in a JIT mode.

Let $\Omega$ be a set that includes all subsets of jobs sharing the property
that each $\omega\in \Omega$ includes at most $\#d$ jobs each of which is of a different type.
Note that
($i$) the set of feasible JIT jobs is an element of $\Omega$;
and
($ii$) $\mid\Omega\mid=O(n^{\#d})$.

Consider a specific subset of jobs $\omega\in \Omega$.
Next,
we explain how we can determine, in XP time,
whether there exists a feasible schedule where all jobs in $\omega$ are scheduled in a JIT mode.
Given a set $\omega$, we renumber jobs in the set $\omega=\{J_1,J_2,\ldots,J_{d'}\}$ such that $d_1<d_2<\ldots<d_{d'}$ ($d' \leq \#d$).
We first check if the condition that $d_{j-1}+p^{(3)}_j\leq d_{j}$ holds for $j=1,\ldots,d'$, where $d_0=0$ by definition.
If not, then there is no feasible schedule with $E=\omega$.
Otherwise, based on Lemma~\ref{lemma:Choi1},
we have that there are $d'!$ possible permutations to schedule jobs in $\omega$ on machines $M_1$ and $M_2$
(as according to this lemma there exists an optimal schedule where the permutation on the first two machines is identical).
For each such possible permutation $\sigma$, we schedule jobs on $M_1$ and $M_2$ according to Lemma~\ref{lemma:sh},
and therefore obtain (in linear time) the completion time $C^{(2)}_j$ of each job $J_j$ on $M_2$.

Finally, we check if job $J_j$ is indeed ready to be scheduled during the time interval $(d_{j}-p^{(3)}_j, d_{j}]$ on $M_3$ for $j=1,\ldots,n$;
that is, if $C^{(2)}_j\leq d_{j}-p^{(3)}_j$, for $j=1,\ldots,n$.
If this last condition holds,
then there is a feasible schedule with $E=\omega$.
By checking, in a similar fashion, the feasibility of each subset in $\Omega$,
we can find (in XP time) an optimum schedule.
To summarize the above discussion, we can use Algorithm~\ref{algorithm:xp-for-d} below to solve the \textsc{$F_3||\sum_{J_{j} \in E} w_j$} problem in XP time wrt.\ $\#d$.

\begin{algorithm}
\setstretch{\setstretchvalue}
  \textbf{Initialization:} \\
    Set $\mathrm{Opt} = 0$, $d_0=0$, and $E^*=\omega^*=\emptyset.$ \\
  \textbf{Step 1:} \\
    Construct a set $\Omega$, which includes all possible subsets of jobs that are of different types. \\
  \textbf{Step 2:} \\
   \While {$\Omega\neq \emptyset$} {
    \textbf{Step 2.1:} \\
    Select a specific set $\omega\in \Omega$, and set $\Omega=\Omega\setminus \omega$. \\
    Renumber jobs in $\omega = \{J_1, J_2, \ldots, J_{d'}\}$ such that $d_1<d_2<\ldots<d_{d'}$. \\
     If $\sum_{J_j\in \omega}w_j\leq \mathrm{Opt}$ {return to Step 2}. \\
     \textbf{Step 2.2:} \\
        \For {$j=1$ to $d'$} {
          \If {$d_{j - 1} + p^{(3)}_j > d_{j}$} {
            return to Step 2.
          }
        }
      \textbf{Step 2.3:} \\
        Construct a set $\Sigma$ which includes all possible permutations of the $d'$ jobs in $\omega$. \\
      \textbf{Step 2.4:} \\
       \While {$\Sigma\neq \emptyset$} {
          Select a specific permutation $\sigma\in \Sigma$, and set $\Sigma=\Sigma\setminus \sigma$. \\
          \textbf{Step 2.4.1:} \\
            Define $J_{[j]}$ as the $j$'th job in $\sigma$, and set
             $C^{(1)}_{[0]}=C^{(2)}_{[0]} = 0$. \\
          \textbf{Step 2.4.2:} \\
            \For {$j=1$ to $d'$} {
              Set $C^{(1)}_{[j]}=C^{(1)}_{[j-1]}+p^{(1)}_{[j]}$ and
               $C^{(2)}_{[j]}=\max\{C^{(2)}_{[j-1]},C^{(1)}_{[j]}\}+p^{(2)}_{[j]}$.
              \If {$C^{(2)}_j> d_{j}-p^{(3)}_j$} {
               return to Step 2.4.
              }
            }
        \textbf{Step 2.4.3:} \\
              Set $\mathrm{Opt} = \sum_{J_j\in \omega}w_j$, $E^*=\omega$, and $\sigma^*=\sigma$.
              Goto Step 2.
            }
        }
      \textbf{Output:} \\
      The optimal set of on-time jobs is $E^*$.
      Schedule jobs in $E^*$ on $M_1$ and $M_2$ according to Lemma~\ref{lemma:sh} with permutation $\sigma=\sigma^*$.
      Moreover,
      schedule each job $J_j\in E^*$  on $M_3$ during the time interval $(d_j-p_j^{(3)},d_j]$.
  \caption{An XP algorithm for the \textsc{$F_3||\sum_{J_{j} \in E} w_j$} problem wrt.\ $\#d$.}\label{algorithm:xp-for-d}
\end{algorithm}

\begin{theorem}\label{theorem:twotwo}
  The \textsc{$F_3||\sum_{J_{j} \in E} w_j$} problem is solvable in $O(n^{\#d+1}\#d!)$ time, and thus belongs to XP wrt.\ the parameter $\#d$.
\end{theorem}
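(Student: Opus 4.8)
The plan is to argue that Algorithm~\ref{algorithm:xp-for-d}, which is already fully specified above, both solves the problem correctly and runs within the stated bound; since the algorithm is given, the proof reduces to a correctness argument together with a running-time count. For correctness, I would first invoke the observation made before the algorithm: any two jobs of the same type share a due date, so at most one of them can be completed in a JIT mode. Hence every feasible JIT set $E$ contains at most one job of each of the $\#d$ types and is therefore one of the subsets enumerated in the collection $\Omega$ of Step~1. It then suffices to show that, for a fixed candidate $\omega\in\Omega$, the tests of Steps~2.2--2.4 correctly decide whether a feasible schedule with $E=\omega$ exists.

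The decision argument combines the three structural lemmas applied simultaneously to $\omega$. Renumber $\omega=\{J_1,\ldots,J_{d'}\}$ by increasing due date. By Lemma~\ref{lemma:Choi2} there is an optimal schedule whose order on $M_3$ is EDD, which here is forced to be $J_1,\ldots,J_{d'}$; since only the jobs of $\omega$ are processed and each must occupy the interval $(d_j-p^{(3)}_j, d_j]$ on $M_3$, these intervals are pairwise disjoint exactly when $d_{j-1}+p^{(3)}_j\le d_j$ for all $j$, which is the check in Step~2.2. For machines $M_1$ and $M_2$, Lemma~\ref{lemma:Choi1} guarantees an optimal schedule with $\sigma_1=\sigma_2$, so it is enough to enumerate all $d'!$ common permutations (Step~2.3); for each, Lemma~\ref{lemma:sh} allows us to schedule both machines as soon as possible and compute the completion times $C^{(2)}_j$ on $M_2$ in linear time (Step~2.4.2). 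Job $J_j$ can reach $M_3$ in time for its JIT interval precisely when $C^{(2)}_j\le d_j-p^{(3)}_j$, which is the remaining feasibility test. Thus $\omega$ admits a feasible schedule if and only if some enumerated permutation passes all checks, and taking the maximum-weight feasible $\omega$ over $\Omega$ yields an optimal solution.

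For the running time I would count $\mid\Omega\mid=O(n^{\#d})$ candidate sets, each giving rise to at most $d'!\le\#d!$ permutations, with each permutation handled in $O(\#d)$ time by the loop of Step~2.4.2 (the renumbering and the prefix test of Step~2.2 are also within $O(\#d\log\#d)$ and are dominated). Multiplying these factors and using $\#d\le n$ gives $O(n^{\#d}\cdot\#d!\cdot\#d)=O(n^{\#d+1}\#d!)$, which places the problem in XP with respect to $\#d$.

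I do not expect a serious obstacle here; the only point requiring care is the joint application of the three lemmas to the chosen candidate set, namely that one may simultaneously assume the EDD order on $M_3$, an identical as-soon-as-possible permutation on $M_1$ and $M_2$, and idle time only where needed. This is what guarantees that the family of schedules enumerated for each $\omega$ contains a feasible one whenever any feasible schedule with $E=\omega$ exists, so that no feasible candidate is wrongly rejected.
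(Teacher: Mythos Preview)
Your proposal is correct and follows essentially the same approach as the paper's own proof: correctness is delegated to the structural discussion preceding the algorithm (Lemmas~\ref{lemma:Choi1}, \ref{lemma:Choi2}, \ref{lemma:sh} applied to each candidate $\omega$), and the running time is obtained by multiplying $|\Omega|=O(n^{\#d})$ by $\#d!$ permutations by linear work per permutation. Your write-up is in fact more explicit than the paper's (which simply points back to the preceding discussion and states that Step~2.4 costs $O(n\cdot\#d!)$), and your closing remark about the simultaneous applicability of the three lemmas to a fixed $\omega$ is a fair caveat, but no genuine gap is present.
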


\begin{proof}
The fact that Algorithm~\ref{algorithm:xp-for-d} solves the \textsc{$F_3||\sum_{J_{j} \in E} w_j$} problem
follows from the discussion that precedes the algorithm description.
Step 1 requires $O(n^{\#d})$ time, as is the number of times we perform Step 2.
The fact that the most time consuming step within Step 2 of Algorithm~\ref{algorithm:xp-for-d} is Step 2.4,
which requires $O(n\#d!)$ time,
completes our proof.
\end{proof}

We note that the result in Theorem~\ref{theorem:twotwo} can be extended to show that the more general \textsc{$F_m||\sum_{J_{j} \in E} w_j$} problem
is solvable in XP time for any fixed $m$ value.
The only change required in Algorithm~\ref{algorithm:xp-for-d} is in Step 2.3 where we need to enumerate not only all possible permutations on the first two machine,
but also the permutations on machines $M_3, M_4,\ldots,M_{m-1}$
(recall that according to Lemma~\ref{lemma:Choi1} and Lemma~\ref{lemma:Choi2} there exists an optimal schedule
in which job permutations on the first two machines are identical;
further, in an optimal schedule job permutation on the last machine follows the EDD rule).
We conclude as follows.

\begin{corollary}\label{corollary:one}
  The \textsc{$F_m||\sum_{J_{j} \in E} w_j$} problem is solvable in $O(n^{\#d+1}(\#d!)^{m-2})$ time,
  and thus,
  when $m$ is fixed,
  belongs to XP wrt.\ the number $\#d$ of different due dates.
\end{corollary}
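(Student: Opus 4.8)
The plan is to reuse, almost verbatim, the structure of the proof of Theorem~\ref{theorem:twotwo}, inflating only the permutation-enumeration step of Algorithm~\ref{algorithm:xp-for-d}. As before, I would first enumerate the set $\Omega$ of all candidate JIT sets: subsets containing at most one job of each of the $\#d$ types, of which there are $|\Omega| = O(n^{\#d})$, and among which the true optimal set $E$ must appear. For a fixed candidate $\omega \in \Omega$ I then want to decide whether all jobs of $\omega$ can be completed exactly at their due dates. The three structural lemmas tell me how much of the schedule I am free to fix by convention and how much I must guess: by Lemma~\ref{lemma:Choi1} I may take $\sigma_1 = \sigma_2$, so the first two machines contribute only a single permutation to guess; by Lemma~\ref{lemma:Choi2} the order $\sigma_m$ on the last machine is forced to be the EDD order, so it contributes nothing to guess; and the intermediate machines $M_3, \ldots, M_{m-1}$ admit no such simplification (indeed the counterexample cited after Lemma~\ref{lemma:sh} shows the intermediate orders need not agree with EDD), so each of these $m-3$ machines must be brute-forced independently. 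This yields $1 + (m-3) = m-2$ permutations to enumerate, each over the $d' \le \#d$ jobs of $\omega$, for a total of at most $(\#d!)^{m-2}$ permutation tuples per candidate set.

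For each such tuple the feasibility test is the natural generalization of Steps 2.2 and 2.4 of Algorithm~\ref{algorithm:xp-for-d}. First I would perform the last-machine window check $d_{j-1} + p^{(m)}_j \le d_j$ (in EDD order) guaranteeing that the forced JIT intervals $(d_j - p^{(m)}_j, d_j]$ on $M_m$ do not overlap. Then, invoking Lemma~\ref{lemma:sh}, I schedule the jobs as soon as possible on $M_1, \ldots, M_{m-1}$ following the guessed permutations, propagating completion times machine by machine via the recurrence $C^{(i)}_{[j]} = \max\{C^{(i)}_{[j-1]}, C^{(i-1)}_{[j]}\} + p^{(i)}_{[j]}$, and finally verify $C^{(m-1)}_j \le d_j - p^{(m)}_j$ for every $j$, i.e.\ that each job reaches $M_m$ in time to occupy its JIT window. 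If every condition holds, $E = \omega$ is realizable; taking the realizable $\omega$ of maximum weight gives the optimum. Correctness of this enumeration is exactly the point to defend: Lemmas~\ref{lemma:Choi1}, \ref{lemma:Choi2}, and~\ref{lemma:sh} jointly certify that some optimal schedule has $\sigma_1=\sigma_2$, has $\sigma_m$ in EDD order, and is of the as-soon-as-possible form on the first $m-1$ machines, so once the JIT set and the $m-2$ free permutations are fixed this ASAP-plus-EDD completion is without loss of generality, and the search space therefore contains an optimal schedule.

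For the running time, the outer loop over $\Omega$ runs $O(n^{\#d})$ times, the inner loop over permutation tuples runs $O((\#d!)^{m-2})$ times, and each feasibility test costs $O(mn) = O(n)$ once $m$ is treated as a fixed constant; multiplying gives $O(n^{\#d+1}(\#d!)^{m-2})$, as claimed, which specializes to the $O(n^{\#d+1}\#d!)$ bound of Theorem~\ref{theorem:twotwo} at $m=3$. Since for fixed $m$ this is of the form $n^{f(\#d)}$, the problem lies in XP with respect to $\#d$. I expect the only genuine subtlety to be the completeness argument of the previous paragraph, namely confirming that enumerating arbitrary orders on $M_3, \ldots, M_{m-1}$ while fixing $\sigma_1=\sigma_2$ and $\sigma_m$ to EDD does not inadvertently exclude every optimal schedule; the routine pieces (interval non-overlap, completion-time propagation, and the complexity bookkeeping with $m$ absorbed into the constant) follow directly from the $m=3$ case.
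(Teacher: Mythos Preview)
Your proposal is correct and follows essentially the same approach as the paper: enumerate the $O(n^{\#d})$ candidate JIT sets, and for each one brute-force the $m-2$ free permutations (one shared on $M_1,M_2$ by Lemma~\ref{lemma:Choi1}, plus one on each of $M_3,\ldots,M_{m-1}$, with $\sigma_m$ fixed to EDD by Lemma~\ref{lemma:Choi2}), then test feasibility in $O(n)$ time via the as-soon-as-possible propagation of Lemma~\ref{lemma:sh}. Your write-up is in fact more explicit about the correctness argument and the completion-time recurrence than the paper's own one-paragraph sketch preceding the corollary.
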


\section{Two-Machines JIT Flow-Shop wrt.\ Combined Parameters}\label{section: pre2}

Theorem~\ref{theorem:one} shows the (fixed-parameter) intractability of the \textsc{$F_2||\sum_{J_j \in E} w_j$} problem
when we combine $\#d$ with the number of different processing times on the second machine ($\#p^{(2)}$).
Below, we analyze the tractability of the \textsc{$F_2||\sum_{J_j \in E} w_j$} problem
when we combine $\#d$ with either the number of different processing times on the first machine ($\#p^{(1)}$) or with the number of different weights ($\#w$).

Since we focus on the two machine case, according to Lemmas~\ref{lemma:Choi1} and \ref{lemma:Choi2},
there exists an optimal schedule with all jobs in the set $E$ being scheduled according to the EDD rule on both machines.
Suppose that all $n$ jobs are numbered according to the EDD rule such that $d_1\leq d_2\leq\ldots\leq d_n$.
Moreover,
consider two subschedules $S_1$ and $S_2$ defined on a subset of the first $j$ jobs,
with a corresponding feasible JIT sets $E_1$ and $E_2$, respectively.
Then the following lemma holds (see Shabtay~\cite{DBLP:journals/eor/Shabtay12}).

\begin{lemma}\label{lemma:shab1}
  Subschedule $S_2$ is dominated by subschedule $S_1$ if the following two conditions holds:
  ($i$) $\sum_{J_{j} \in {E_1}} w_j \geq \sum_{J_{j} \in {E_2}} w_j$;
  and
  ($ii$) $\sum_{J_{j} \in {E_1}} p^{(1)}_{j} \leq \sum_{j \in {E_2}} p^{(1)}_{j}$.
\end{lemma}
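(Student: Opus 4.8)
The plan is to read ``dominated'' in the standard dynamic-programming sense: I want to show that, in the search for an optimal schedule, we may safely discard $S_2$ because every feasible completion of $S_2$ into a schedule of all $n$ jobs can be matched in objective value by a feasible completion of $S_1$. Since both subschedules live on (a subset of) the first $j$ jobs and every remaining job $J_{j+1},\dots,J_n$ has due date at least $d_{j+1}$, I would fix an arbitrary feasible extension of $S_2$ whose set of JIT jobs among the last $n-j$ jobs is some $F\subseteq\{J_{j+1},\dots,J_n\}$, so that its JIT set is $E_2\cup F$, and then argue that replacing $E_2$ by $E_1$ while keeping $F$ unchanged yields a feasible schedule with JIT set $E_1\cup F$ whose weight is at least that of $E_2\cup F$.

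The objective part is immediate. Since $E_1,E_2\subseteq\{J_1,\dots,J_j\}$ and $F\subseteq\{J_{j+1},\dots,J_n\}$, the unions are disjoint, so by condition $(i)$,
\[
\sum_{J_j\in E_1\cup F}w_j=\sum_{J_j\in E_1}w_j+\sum_{J_j\in F}w_j\ \ge\ \sum_{J_j\in E_2}w_j+\sum_{J_j\in F}w_j=\sum_{J_j\in E_2\cup F}w_j.
\]

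The feasibility part is where the two conditions do the work, and I would split it by machine. On $M_1$, by Lemma~\ref{lemma:sh} the JIT jobs are processed as early as possible and in EDD order, so the completion time on $M_1$ of any future job $J_\ell\in F$ equals the total first-machine load of all JIT jobs preceding it: under $S_1$ this is $\sum_{J_j\in E_1}p^{(1)}_j$ plus the first-machine load of the jobs of $F$ up to $J_\ell$, whereas under $S_2$ it is $\sum_{J_j\in E_2}p^{(1)}_j$ plus the identical $F$-tail. By condition $(ii)$ the $S_1$ value never exceeds the $S_2$ value, so every job of $F$ completes on $M_1$ no later under $S_1$ than under $S_2$; since these jobs completed on $M_1$ by their second-machine start time $d_\ell-p^{(2)}_\ell$ in the feasible extension of $S_2$, they still do so after the swap.

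The main obstacle is machine $M_2$. Because a JIT job $J_\ell$ must finish on $M_2$ exactly at $d_\ell$, its $M_2$ interval is pinned to $(d_\ell-p^{(2)}_\ell,d_\ell]$ regardless of the first-machine schedule, so feasibility of a JIT set is equivalent to pairwise disjointness of these fixed intervals. The intervals of $E_1$ are disjoint since $E_1$ is feasible, and those of $F$ are disjoint since they are inherited unchanged from the feasible extension of $S_2$; what must be checked is that no interval of $E_1$ meets an interval of $F$. Every interval of $E_1$ ends by $d_j$ while every job of $F$ has due date exceeding $d_j$, so the only possible conflict is with a future job whose $M_2$ interval begins before $d_j$ (note that condition $(ii)$ alone does not preclude $E_1$ having a larger last due date than $E_2$, which is exactly the danger). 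I would resolve this by observing that the dominance is invoked within a common state that records the occupied $M_2$ region forward of the prefix, equivalently the last scheduled JIT job and its due date $d_{\max}$, so that $S_1$ and $S_2$ expose the same right endpoint to the future; the feasibility of $S_2$'s extension then already guarantees $d_{\max}\le d_\ell-p^{(2)}_\ell$ for the first job of $F$, which is precisely what $E_1\cup F$ requires. This boundary argument is the delicate point and the place where the EDD processing order is essential; once it is settled, combining the three parts shows that the best completion of $S_1$ is at least as good as that of $S_2$, so $S_2$ is dominated.
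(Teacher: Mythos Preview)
The paper does not actually prove this lemma; it is stated with a citation to Shabtay~\cite{DBLP:journals/eor/Shabtay12} and no proof is given, so there is nothing in the paper to compare your argument against directly.

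Your overall strategy---fix an arbitrary feasible extension $E_2\cup F$ of $S_2$ and show that $E_1\cup F$ is a feasible extension of $S_1$ with at least as much weight---is the right reading of ``dominated'', and both the objective inequality via condition~(i) and the $M_1$ timing argument via condition~(ii) are correct.

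Where you are being more careful than the paper is on the $M_2$ boundary, and rightly so. Conditions~(i) and~(ii) alone do \emph{not} force the last due date used in $E_1$ to coincide with that in $E_2$, so $E_1$ could in principle contain a job whose fixed $M_2$ interval overlaps the interval of the first job of $F$ even though $E_2$ has no such conflict. This is a genuine gap in the lemma as literally stated: one can build three jobs with $E_1=\{J_2\}$, $E_2=\{J_1\}$ satisfying (i) and (ii), together with a job $J_3$ whose $M_2$ interval clashes with $J_2$'s but not $J_1$'s, so that $S_2$ extends to a strictly better schedule than any extension of $S_1$. Your resolution---that the dominance is only invoked between subschedules sharing the same $M_2$ right endpoint---is exactly the correct fix, and it matches how the paper actually applies the lemma in Algorithms~\ref{algorithm:fpt-for-dp1} and~\ref{algorithm:fpt-for-dp2}, where at step $l$ every candidate subschedule ends on $M_2$ at the common time $d_{l-1}$. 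With that contextual hypothesis made explicit your argument is complete, and you have in fact put your finger on a small imprecision in the statement as reproduced here.
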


\subsection{FPT for the Two-Machine Case wrt.\ $\#d+\#p^{(1)}$}\label{section: pre2.1}

In this subsection, we say that two jobs $J_i$ and $J_j$ are of the same \emph{type} if both $d_i=d_j$ and $p^{(1)}_i=p^{(1)}_j$.
Let $k$ be the number of different job types ($k\leq \#d \cdot \#p^{(1)}$).
Note that
($i$) at most one job of each type can be scheduled in a JIT mode;
and
($ii$) jobs of the same type are differentiated by their weight and by their processing time on the second machine.

Given an instance for the \textsc{$F_2||\sum_{J_{j} \in E} w_j$} problem,
we partition the set $\mathcal{J}$ into subsets of job types $\mathcal{J}_1, \mathcal{J}_2,\ldots,\mathcal{J}_k$,
where all jobs in $\mathcal{J}_i$ ($i=1,\ldots,k$) have the same due date, $d_i$, and the same processing time on $M_1$, $p_i^{(1)}$.
Now,
let $J_{ij}$ be the $j$th job in $\mathcal{J}_i$.
Given a job schedule, we say that job set (type) $\mathcal{J}_i$ is a \emph{JIT job set} if one of jobs in $\mathcal{J}_i$ is scheduled in a JIT mode.
Accordingly,
a subset of job types is considered to be a feasible subset of job types
if there exists a feasible schedule which includes a single job of each of these types in the set of JIT jobs (set $E$).

Let $\Omega$ be a set that includes all possible (feasible or not) subsets of job types having different due dates.
Note that
($i$) the sets of all feasible subsets of job types is a subset of $\Omega$;
and
($ii$) $\mid\Omega\mid=O(2^k)$.

Consider now a specific subset of job sets $\omega\in \Omega$.
Below we explain how we can determine, in polynomial time,
the best feasible schedule (if such exists) out of all feasible schedules sharing the common property that all job sets in $\omega$ are JIT job sets.

Given a set $\omega$, we renumber job sets in $\omega$ such that $d_1<d_2<\ldots<d_{k'}$ ($k' \leq k$).
Consider now a feasible non-denominated subschedule $S$ defined on job sets $\mathcal{J}_1, \mathcal{J}_2,\ldots,\mathcal{J}_{l-1}$ ($l-1<k'$).
By definition, we have that the completion time of subschedule $S$ is
($i$) $\sum_{i=1}^{l-1} p^{(1)}_i$ on $M_1$;
and
($ii$) $d_{l-1}$ on $M_2$.
Consider now the extension of $S$ to a subschedule on the extended set that includes also $\mathcal{J}_{l}$.
In such an extension, job $J_{lj}\in \mathcal{J}_l$ is a nominee to be assigned to the set $E$
only if $\max \{\sum_{i=1}^{l} p^{(1)}_i, d_{l-1}\}+p_{lj}^{(2)}\leq d_l$.
Let $N(l)$ be the set of all jobs of type $l$ that are nominees to be assigned to the set $E$.
If $N(l)=\emptyset$,
then we have no feasible solution with all job types in $\omega$ being JIT.
Otherwise, based on Lemma~\ref{lemma:shab1},
we have that it is optimal to assign to the set $E$ job with the smallest processing time on $M_1$ among all jobs in $N(l)$.
Based on this observation and the above analysis, we can use Algorithm~\ref{algorithm:fpt-for-dp1} below to solve the \textsc{$F_2||\sum_{J_{j} \in E} w_j$} problem with respect to the combined parameter $\#d+\#p^{(1)}$.

\begin{algorithm}
\setstretch{\setstretchvalue}
  \textbf{Initialization:} \\
    Set $\mathrm{Opt} = 0$, $d_0=0$, and $E^*=\emptyset$. \\
  \textbf{Step 1:} \\
    Partition the set $\mathcal{J}$ into subsets of job types $\mathcal{J}_1, \mathcal{J}_2,\ldots,\mathcal{J}_k$,
    where all jobs in $\mathcal{J}_i$ have the same due date, $d_i$, and the same processing time on $M_1$, $p_i^{(1)}$. \\
  \textbf{Step 2:} \\
    Construct a set $\Omega$, which includes all possible subsets of job types having different due dates. \\
  \textbf{Step 3:} \\
   \While {$\Omega\neq \emptyset$} {
    \textbf{Step 3.1:} \\
      Select a specific set $\omega\in \Omega$. Set $\Omega=\Omega\setminus \omega$. \\
      Let $k'$ be the number of job sets in $\omega$. \\
      Renumber job sets in $\omega$ such that $d_1<d_2<\ldots<d_{k'}$. \\
      Set $V(\omega)=0$ and $E(\omega)=\emptyset$. \\
      \textbf{Step 3.2:} \\
        Set $P_1=0$ \\
        \For {$i=1$ to $k'$} {
          Set $P_1=P_1+p_i^{(1)}$. \\
          Set $N(i)=\{J_{ij}\in \mathcal{J}_i \mid \max \{P_1, d_{i-1}\}+p_{ij}^{(2)}\leq d_{i}\}$. \\
          If $N(i)=\emptyset$  {return to Step 3.} \\ {
          Set $j^*=\arg \max \{w_{ij} | J_{ij}\in N(i)\}$. \\
          Set $E(\omega)=E(\omega) \bigcup \{J_{ij^*}\}$; and $V(\omega)=V(\omega)+w_{ij^*}$.  \\
          }
        }
      \textbf{Step 3.3:} \\
        \If {$V(\omega)> \mathrm{Opt}$} {
          Set $\mathrm{Opt}=V(\omega)$ and $E^*=E(\omega)$. \\
        }
        Goto Step 2.
      }
      \textbf{Output:} \\
        The optimal set of on-time jobs is $E^*$.
        Schedule jobs in $E^*$ on $M_1$ one after the other with no delays, and on $M_2$ during the time interval $(d_j-p_j^{(2)},d_j]$.
  \caption{A fixed-parameter algorithm for the \textsc{$F_2||\sum_{J_{j} \in E} w_j$} problem wrt.\ $\#d$+$\#p^{(1)}$.}\label{algorithm:fpt-for-dp1}
\end{algorithm}

\begin{theorem}\label{theorem:three}
  Algorithm~\ref{algorithm:fpt-for-dp1} solves the \textsc{$F_2||\sum_{J_{j} \in JIT} w_j$} problem in $O(n2^k)$ time,
  where $k=\#d \cdot \#p^{(1)}$.
  Thus, the problem is fixed-parameter tractable when parameterized by combining
  the number $\#d$ of different due dates
  with the number $\#p^{(1)}$ of different processing times on the first machine.
\end{theorem}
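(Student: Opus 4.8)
The plan is to establish two things: that Algorithm~\ref{algorithm:fpt-for-dp1} returns an optimal JIT set, and that it does so within the claimed $O(n2^k)$ time bound. I would begin the correctness argument by invoking Lemmas~\ref{lemma:Choi1} and~\ref{lemma:Choi2} to restrict attention to permutation schedules in which the JIT jobs appear in EDD order on both machines, and Lemma~\ref{lemma:sh} to assume that these jobs are processed back-to-back (as soon as possible) on $M_1$. Since jobs sharing a due date cannot all meet it, at most one job per due date is JIT; consequently the set of types whose representatives are JIT forms a collection with pairwise-distinct due dates, i.e., exactly some $\omega \in \Omega$. Thus, as $\Omega$ enumerates every such collection, it suffices to show that for each fixed $\omega$ the algorithm computes the maximum weight attainable by a feasible schedule whose JIT types are precisely those in $\omega$.

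For a fixed $\omega$, renumbered so that $d_1 < \cdots < d_{k'}$, I would verify that the quantity $\max\{P_1, d_{i-1}\}$ computed in Step~3.2 is exactly the earliest time at which a type-$i$ job can start on $M_2$: its completion on $M_1$ is the cumulative sum $P_1 = \sum_{\ell \le i} p^{(1)}_\ell$ (the back-to-back assumption), while the preceding JIT job vacates $M_2$ at $d_{i-1}$. Hence the nominee test $\max\{P_1,d_{i-1}\} + p^{(2)}_{ij} \le d_i$ characterizes precisely those type-$i$ jobs that can complete exactly at $d_i$ within the partial schedule, and $N(i)=\emptyset$ correctly certifies that $\omega$ is infeasible. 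The key structural point I would then stress is interchangeability: every type-$i$ job shares the same $p^{(1)}_i$ and, being JIT, completes on $M_2$ at the common time $d_i$, so the choice among the nominees in $N(i)$ affects neither $P_1$ nor the $M_2$ release of any later type, and therefore has no bearing on future feasibility. By the dominance relation of Lemma~\ref{lemma:shab1} (whose condition~($ii$) on $M_1$ processing is automatically tight within a single type), selecting the maximum-weight nominee is optimal, which is exactly what Step~3.2 does; summing these per-type choices yields the optimum value $V(\omega)$ for that $\omega$.

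The main obstacle is the interchangeability/exchange argument just outlined: showing rigorously that the locally greedy maximum-weight choice within each type is globally optimal for the fixed $\omega$, and that no alternative nominee could have rescued an otherwise infeasible later type. Once this is in place, the remaining analysis is routine. For the running time, $\Omega$ ranges over subsets of the $k \le \#d\cdot\#p^{(1)}$ types, so $|\Omega| = O(2^k)$ and the outer loop runs $O(2^k)$ times. Within one iteration, Step~3.2 scans the jobs of each selected type to form the sets $N(i)$ and their weight-maximizing representatives; since the type classes partition $\mathcal{J}$, this costs $O(n)$ in total per $\omega$. The overall running time is therefore $O(n2^k)$, and because $k$ depends only on the two parameters $\#d$ and $\#p^{(1)}$, this is of the form $f(\#d,\#p^{(1)})\cdot n$, establishing fixed-parameter tractability.
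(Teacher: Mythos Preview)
Your proposal is correct and follows essentially the same approach as the paper: enumerate all type-subsets $\omega\in\Omega$ with pairwise-distinct due dates, and for each $\omega$ greedily pick the maximum-weight nominee per type, with correctness guaranteed because all jobs of a fixed type share both $p^{(1)}$ and $d$ (so the choice within a type does not influence later feasibility) and the dominance Lemma~\ref{lemma:shab1} then forces the max-weight selection. Your write-up in fact spells out the interchangeability step more explicitly than the paper's own proof, which largely defers to the discussion preceding the algorithm; the running-time analysis is identical.
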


\begin{proof}
The fact that Algorithm~\ref{algorithm:fpt-for-dp1} solves the \textsc{$F_2||\sum_{J_{j} \in E} w_j$} problem
follows from the discussion that precedes the algorithm description.
Step 1 assigns jobs to the different job sets (types) and requires a linear time.
In Step 2 we construct a set $\Omega$ which includes all possible subsets of job types having different due dates;
this step requires $O(2^k)$ time (as $|\Omega|=O(2^k)$).
In Step 3, for each $\omega\in \Omega$, our algorithm finds the best feasible schedule
(if such exists)
out of all schedules where all job sets in $\omega$ are JIT job sets.
The fact that $|\Omega|=O(2^k)$ and that the most time consuming operation within Step 3 is Step 3.2,
which requires $O(n)$ time,
completes our proof.
\end{proof}

\subsection{FPT for the Two-Machine Case wrt.\ $\#d+\#w$}\label{section: pre2.2}

In this subsection we present an algorithm,
following a similar approach to that used in subsection~\ref{section: pre2.1},
which is a fixed-parameter algorithm for the case where we combine the two parameters $\#d$ and $\#w$.
Here we say that two jobs $J_i$ and $J_j$ are of the same \emph{type} if both $d_i=d_j$ and $w_i=w_j$.
Let $k$ be the number of different job types ($k\leq \#d \cdot \#w$).

Given an instance for the \textsc{$F_2||\sum_{J_{j} \in E} w_j$} problem,
we partition the set $\mathcal{J}$ into subsets of job types $\mathcal{J}_1, \mathcal{J}_2,\ldots,\mathcal{J}_k$,
where all jobs in $\mathcal{J}_i$ ($i=1,\ldots,k$) have the same due date, $d_i$, and the same weight, $w_i$.
Now, let $J_{ij}$ be the $j$th job in $\mathcal{J}_i$.
Given a job schedule, we say that job set (type) $\mathcal{J}_i$ is a \emph{JIT job set} if one of jobs in $\mathcal{J}_i$ is scheduled in a JIT mode.
Accordingly, a subset of job sets (types) is considered to be a feasible subset of job sets if there exists a feasible schedule
which includes a single job of each of these sets in the set of JIT jobs (i.e., the set $E$).

Now, let $\Omega$ be a set that includes all possible (feasible or not) subsets of job types having different due dates.
Note that
($i$) the sets of all feasible subsets of job types is a subset of $\Omega$;
and
($ii$) $\mid\Omega\mid=O(2^k)$.
Consider a specific subset of job sets $\omega\in \Omega$.
Below we explain how we can determine, in polynomial time, the best feasible schedule (if such exists)
out of all the feasible schedules sharing the common property that all job sets in $\omega$ are JIT job sets.

Given a set $\omega$,
we renumber job types in $\omega$ such that $d_1<d_2<\ldots<d_{k'}$ ($k' \leq k$).
Consider now a feasible non-denominated subschedule $S$ defined on job sets $\mathcal{J}_1, \mathcal{J}_2,\ldots,\mathcal{J}_{l-1}$ ($l-1<k'$).
By definition,
we have that
($i$) the completion time of the subschedule $S$ on $M_2$ is at time $d_{l-1}$;
and
($ii$) the total weight of the JIT jobs in $S$ is $\sum_{i=1}^{l-1} w_i$.
Now, let $P^{(1)}(S)$ be the completion time of the subschedule $S$ on $M_1$,
and consider the extension of $S$ to a subschedule on the extended set of job types that includes also $\mathcal{J}_{l}$.
In such an extension, job $J_{lj}\in \mathcal{J}_l$ is a nominee to be assigned to the set $E$ only
if $\max \{P^{(1)}(S)+p_{lj}^{(1)}, d_{l-1}\}+p_{lj}^{(2)}\leq d_l$.
Let $N(l)$ be the set of all jobs of type $l$ that are nominees to be assigned to the set $E$.
If $N(l)=\emptyset$,
then we have no feasible solution with all job sets in $\omega$ being JIT.
Otherwise, based on Lemma~\ref{lemma:shab1},
it follows that it is optimal to assign to the set $E$ job with the smallest processing time on $M_1$ among all jobs in $N(l)$.
Based on this observation and the above analysis,
we present Algorithm~\ref{algorithm:fpt-for-dp2} to solve the \textsc{$F_2||\sum_{J_{j} \in E} w_j$} problem with respect to the combined parameter $\#d$+$\#w$.

\begin{algorithm}
\setstretch{\setstretchvalue}
  \textbf{Initialization:} \\
    Set $\mathrm{Opt} = 0$, $d_0=0$, and $E^*=\emptyset$. \\
  \textbf{Step 1:} \\
    Partition the set $\mathcal{J}$ into subsets of job types $\mathcal{J}_1, \mathcal{J}_2,\ldots,\mathcal{J}_k$,
    where all jobs in $\mathcal{J}_i$ have the same due date, $d_i$, and the same weight, $w_i$. \\
  \textbf{Step 2:} \\
    Construct a set $\Omega$, which includes all possible subsets of job types having different due dates. \\
  \textbf{Step 3:} \\
    \While {$\Omega\neq \emptyset$} {
      \textbf{Step 3.1:} \\
        Select a specific set $\omega\in \Omega$. Set $\Omega=\Omega\setminus \omega$. \\
        Let $k'$ be the number of job sets in $\omega$. \\
        Renumber job sets in $\omega$ such that $d_1<d_2<\ldots<d_{k'}$. \\
        Set $V(\omega)=0$ and $E(\omega)=\emptyset$. \\
     \textbf{Step 3.2:} \\
        Set $P_1=0$ \\
        \For {$i=1$ to $k'$} {
          Set $N(i)=\{J_{ij}\in \mathcal{J}_i \mid \max \{P_1+p_{ij}^{(1)}, d_{i-1}\}+p_{ij}^{(2)}\leq d_{i}\}$. \\
          If $N(i)=\emptyset$  {return to Step 3.} \\ {
          Set $j^*=\arg \min \{p^{(1)}_{ij} | J_{ij}\in N(i)\}$. \\
          Set $E(\omega)=E(\omega) \bigcup \{J_{ij^*}\}$; $P_1=P_1+J_{ij^*}$; and $V(\omega)=V(\omega)+w_{ij^*}$.  \\
          }
        }
      \textbf{Step 3.3:} \\
      \If {$V(\omega)> \mathrm{Opt}$} {
        Set $\mathrm{Opt}=V(\omega)$ and $E^*=E(\omega)$. \\
      }
      Goto Step 2.
    }
    \textbf{Output:} \\
      The optimal set of on-time jobs is $E^*$.
      Schedule jobs in $E^*$ on $M_1$ one after the other with no delays, and on $M_2$ during the time interval $(d_j-p_j^{(2)},d_j]$.
  \caption{A fixed-parameter algorithm for the \textsc{$F_2||\sum_{J_{j} \in E} w_j$} problem wrt.\ $\#d+\#w$.}\label{algorithm:fpt-for-dp2}
\end{algorithm}

\begin{theorem}\label{theorem:four}
  Algorithm~\ref{algorithm:fpt-for-dp2} solves the \textsc{$F_2||\sum_{J_{j} \in JIT} w_j$} problem in $O(n2^k)$ time,
  where $k=d_{\#d}\cdot \#w$.
  Thus,
  the problem is fixed-parameter tractable when parameterized by combining the number $\#d$ of different due dates
  with the number $\#w$ of different weights.
\end{theorem}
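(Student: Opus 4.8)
The plan is to establish Theorem~\ref{theorem:four} in the same two-part fashion as Theorem~\ref{theorem:three}: first argue that Algorithm~\ref{algorithm:fpt-for-dp2} computes an optimal schedule, and then bound its running time. Because the overall structure of Algorithm~\ref{algorithm:fpt-for-dp2} mirrors that of Algorithm~\ref{algorithm:fpt-for-dp1}, I expect the running-time analysis to be essentially identical, so the substance of the proof lies in the correctness argument, and in particular in justifying the greedy rule of Step 3.2, which here selects the nominee of smallest $M_1$ processing time rather than the nominee of largest weight.

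For correctness I would first record the structural observation that, since two jobs sharing a due date cannot both complete exactly at that due date on $M_2$, at most one job --- and hence at most one type, recall a type is a due-date/weight pair --- per due date can be JIT. Thus the set of types realized by any feasible JIT set has pairwise distinct due dates and is therefore an element of $\Omega$; as the algorithm iterates over all of $\Omega$, it suffices to show that for each fixed $\omega \in \Omega$, Step 3.2 produces a feasible schedule realizing every type of $\omega$ as JIT whenever one exists. The key simplification I would emphasize is that, because all jobs of a type share a common weight, the attained weight $V(\omega) = \sum_{i} w_i$ depends on $\omega$ alone; hence for a fixed $\omega$ the only issue is feasibility.

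The heart of the argument is an exchange lemma showing that the greedy minimum-$p^{(1)}$ rule is optimal for feasibility. I would fix an arbitrary feasible assignment $A^*$ that picks one job per type of $\omega$ (in EDD order) and prove by induction on the type index $l$ that the running $M_1$ completion time $P_1$ maintained by the greedy after step $l$ is no larger than the corresponding value for $A^*$. The inductive step rests on two facts: the nominee condition $\max\{P_1 + p_{lj}^{(1)}, d_{l-1}\} + p_{lj}^{(2)} \le d_l$ only relaxes as $P_1$ decreases, so the choice made by $A^*$ at step $l$ also lies in the greedy's nominee set, which is therefore nonempty; and the $p^{(1)}$ value selected greedily is at most that of $A^*$, keeping $P_1$ dominated. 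This is precisely the dominance of Lemma~\ref{lemma:shab1} specialized to a single type, where equal weights turn condition ($i$) into an equality and leave condition ($ii$) as the operative criterion. Consequently the greedy succeeds whenever $A^*$ does and then attains weight $V(\omega)$, so maximizing over $\omega \in \Omega$ returns a globally optimal JIT set.

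For the running time I would follow the template of Theorem~\ref{theorem:three}: Step~1 is linear; $\Omega$ is a family of subsets of the $k \le \#d \cdot \#w$ types, so $|\Omega| = O(2^k)$; and for each $\omega$ the dominant work is Step 3.2, which passes over the jobs once --- each $N(i)$ together with its argmin touches every job a constant number of times --- in $O(n)$ time, giving the claimed $O(n 2^k)$ bound. I expect the only genuine obstacle to be the exchange argument of the preceding paragraph; the remaining steps are routine and parallel those already carried out for Theorem~\ref{theorem:three}.
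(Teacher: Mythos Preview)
Your proposal is correct and follows essentially the same approach as the paper: correctness is deferred to the discussion preceding the algorithm (the type decomposition, enumeration over $\Omega$, and the greedy minimum-$p^{(1)}$ rule justified via Lemma~\ref{lemma:shab1}), and the running-time bound is argued step by step exactly as you outline. If anything, your exchange argument making the induction on $P_1$ explicit is more detailed than the paper's own treatment, which simply invokes Lemma~\ref{lemma:shab1} without spelling out the inductive dominance.
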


\begin{proof}
The fact that Algorithm~\ref{algorithm:fpt-for-dp2} solves the \textsc{$F_2||\sum_{J_{j} \in E} w_j$} problem
follows the discussion that precedes the algorithm description.
Step 1 assigns jobs to the different job sets and requires a linear time.
In Step 2 we construct a set $\Omega$ that includes all possible subsets of job types having different due dates;
this step requires $O(2^k)$ time.
In Step 3, for each $\omega\in \Omega$, our algorithm finds the best feasible schedule
(if such exists)
out of all schedules where all job sets in $\omega$ are JIT job sets.
The fact that $|\Omega|=O(2^k)$ and that the most time consuming operation within Step 3 is Step 3.2,
which requires $O(n)$ time,
completes our proof.
\end{proof}

\section{Three-Machines JIT Flow-Shop wrt.\ Combined Parameters}\label{section: pre3}

In Section~\ref{section: pre2} we prove that the two-machine problem is FPT with respect to both $\#d$+$\#p^{(1)}$ and $\#d$+$\#w$.
Next we show that these two cases become W[1]-hard when we consider JIT flow-shop scheduling on three machines.
Specifically,
the parameterized intractability of the \textsc{$F_3||\sum_{J_j \in E} w_j$} problem with respect to $\#d$+$\#p^{(1)}$ will be proven using the following lemma.

\begin{lemma} \label{lemma:Intract}
  The \textsc{$F_3||\sum_{J_j \in E} w_j$} problem with unit processing times on the $M_1$ reduces to the \textsc{$F_2||\sum_{J_j \in E} w_j$} problem.
\end{lemma}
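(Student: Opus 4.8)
The plan is to build the two-machine instance by discarding the first machine and shifting the deadlines by one time unit, exploiting the fact that unit processing times on $M_1$ merely delay the whole process by a single unit. Concretely, given an $F_3$ instance with $p^{(1)}_j=1$ for every $j$, I would create an $F_2$ instance on the same job set by setting the new first-machine time to the old second-machine time, the new second-machine time to the old third-machine time, the new due date to $d_j-1$, and keeping every weight $w_j$ unchanged. The claim I want is that a subset $E$ admits a feasible JIT schedule in the $F_3$ instance if and only if it does in the constructed $F_2$ instance; since weights are preserved, this equates the two optima.

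First I would pin down the effect of the unit first machine. By Lemma~\ref{lemma:sh} the jobs of $E$ run as early as possible on $M_1$ and $M_2$, so for any common permutation $\sigma_1=\sigma_2$ (Lemma~\ref{lemma:Choi1}) the $j$th scheduled job finishes $M_1$ at time $j$. A short induction then shows that from time $1$ onward $M_2$ never idles, giving completion time $C^{(2)}_{[j]}=1+\sum_{i=1}^{j}p^{(2)}_{[i]}$ for the $j$th scheduled job. This is exactly one unit more than the completion time of the same job on the first machine of the two-machine instance, where jobs run back-to-back from time $0$. Writing $\delta_j=d_j-p^{(3)}_j$, feasibility in the $F_3$ instance amounts to two conditions: the third-machine intervals $(d_j-p^{(3)}_j,d_j]$ are pairwise disjoint in EDD order (Lemma~\ref{lemma:Choi2}), and each job reaches $M_3$ on time, i.e.\ $C^{(2)}_{[j]}\le\delta_{[j]}$. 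The one-unit shift makes both conditions literally identical to the feasibility conditions of the shifted two-machine instance, so translating a schedule across the reduction is routine once the orders agree.

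The hard part will be reconciling the processing orders. In the $F_2$ instance Lemmas~\ref{lemma:Choi1} and~\ref{lemma:Choi2} force an EDD permutation on both machines, whereas in the $F_3$ instance they only guarantee $\sigma_1=\sigma_2$ together with $\sigma_3$ in EDD order, and these two orders need not coincide. I therefore need to show that when $M_1$ has unit times one may assume $\sigma_1=\sigma_2$ is the EDD order as well. The key is that $M_2$, running without idle time, behaves like a single machine on which job $j$ must finish by its deadline $\delta_j$, and the release times coming from $M_1$ are never binding; hence a standard single-machine exchange argument (Jackson's EDD rule) shows that some order meets all these deadlines if and only if the nondecreasing-$\delta$ order does. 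It then remains to observe that the disjointness condition on $M_3$ forces $\delta$ to be nondecreasing along the EDD-by-$d$ order: disjointness gives $\delta_{[j]}=d_{[j]}-p^{(3)}_{[j]}\ge d_{[j-1]}\ge\delta_{[j-1]}$, so ordering by $d$ already orders by $\delta$. Consequently the EDD-by-$d$ order on $M_2$ meets every arrival deadline whenever any order does, which legitimizes assuming the full EDD permutation.

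With the orders aligned, both directions of the equivalence follow from the one-unit shift of the second paragraph: a feasible $E$ in the $F_3$ instance yields, via EDD on $M_2$, a feasible $E$ in the shifted $F_2$ instance, and conversely, so the optima coincide. Since the construction is computable in linear time and leaves the number of distinct due dates unchanged (every $d_j$ merely drops by one) while collapsing the first machine, it is a valid reduction. I would also note that the map is reversible, which is what later lets the W[1]-hardness of the two-machine case (Theorem~\ref{theorem:one}) transfer to the three-machine case parameterized by $\#d+\#p^{(1)}$. The only genuine obstacle is the order-reconciliation step above; the remaining bookkeeping is mechanical.
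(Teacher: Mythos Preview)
Your argument is correct and, in fact, more careful than the paper's. The paper proves the lemma by writing down the inverse of your map: given an $F_2$ instance it builds an $F_3$ instance with $\bar p^{(1)}_j=1$, $\bar p^{(2)}_j=p^{(1)}_j$, $\bar p^{(3)}_j=p^{(2)}_j$, $\bar w_j=w_j$, $\bar d_j=d_j+1$, and then simply asserts the ``iff'' for feasible JIT sets without further justification. So the underlying construction is identical to yours up to direction, and your reversibility remark recovers the $F_2\to F_3$ map that Theorem~\ref{theorem:hard2} actually needs. The substantive addition in your write-up is the order-reconciliation step: the paper never explains why, in the $F_3$ instance, one may take $\sigma_1=\sigma_2$ to be the EDD order (Lemmas~\ref{lemma:Choi1}--\ref{lemma:Choi2} alone do not give this for $m\ge 3$). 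Your argument---unit $M_1$ times make $M_2$ run without idle time, reducing the arrival constraints to a single-machine deadline problem, and disjointness on $M_3$ forces the $\delta_j=d_j-p^{(3)}_j$ to be monotone along the $d$-order---cleanly closes that gap. One small wording fix: the $M_1$ release time \emph{is} binding for the very first job (it contributes the $+1$ offset you use), so ``never binding'' should read ``never binding after the first job.''
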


\begin{proof}
Given an instance $I=\{n, p_j^{(1)},p_j^{(2)},w_j, d_j\}$ for the \textsc{$F_2||\sum_{J_j \in E} w_j$} problem,
we construct the following instance $\bar{I}=\{\bar{n}, \bar{p}_j^{(1)},\bar{p}_j^{(2)}, \bar{p}_j^{(3)}, \bar{w}_j, \bar{d}_j\}$
for the \textsc{$F_3||\sum_{J_j \in E} w_j$} problem.
We set $\bar{n} = n$.
Moreover,
for $j=1,\ldots,\bar{n}$,
we set
($i$) $\bar{p}_j^{(1)}=1$;
($ii$) $\bar{p}_j^{(2)}=p_j^{(1)}$;
($iii$) $\bar{p}_j^{(3)}=p_j^{(2)}$;
($iv$) $\bar{w}_j=w_j$;
and
($v$) $\bar{d}_j=d_j+1$.
Then, it follows that that there exists a feasible schedule for the constructed instance of the \textsc{$F_3||\sum_{j \in E} w_j$} problem
with all jobs in the set $E$ being scheduled in a JIT mode;
\textit{iff} the same set of jobs is a feasible set of jobs for the corresponding instance of the \textsc{$F_2||\sum_{J_j \in E} w_j$} problem.
\end{proof}

We conclude the following,
based on Theorem~\ref{theorem:one} and on Lemma~\ref{lemma:Intract}.

\begin{theorem}\label{theorem:hard2}
  The \textsc{$F_3||\sum_{J_j \in E} w_j$} problem is W[1]-hard when parameterized by $\#d$+$\#p^{(1)}$
  \textit{even}
  if all jobs have unit processing times on $M_3$.
\end{theorem}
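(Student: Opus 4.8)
The plan is to obtain the result by composing the W[1]-hardness of the two-machine problem from Theorem~\ref{theorem:one} with the instance transformation supplied by Lemma~\ref{lemma:Intract}. Concretely, I would start from the W[1]-hard family of $F_2||\sum_{J_j\in E}w_j$ instances constructed in Theorem~\ref{theorem:one}; recall that these are hard with respect to $\#d$ and that, crucially, every job in them already has unit processing time on the second machine. Feeding each such instance into the construction of Lemma~\ref{lemma:Intract} produces a three-machine instance $\bar I$, and by that lemma a job set $E$ is schedulable in JIT mode for $\bar I$ if and only if the same set is JIT-feasible for the source $F_2$ instance. Hence the target weight threshold is preserved verbatim, and yes-instances map to yes-instances.

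Next I would check that this composed map is a bona fide parameterized reduction by tracking the two parameters of interest in $\bar I$. Because the construction sets $\bar d_j=d_j+1$, the map $d\mapsto d+1$ is a bijection on due-date values, so the number of distinct due dates is preserved and $\#\bar d=\#d$. Because it sets $\bar p_j^{(1)}=1$ for every $j$, the first machine carries a single processing-time value, giving $\#\bar p^{(1)}=1$. Therefore the combined parameter satisfies $\#\bar d+\#\bar p^{(1)}=\#d+1$, which is bounded by a computable function of the source parameter $k$ (indeed $\#d=k+1$ in Theorem~\ref{theorem:one}). Since the transformation runs in polynomial time and the new parameter is bounded as required, all conditions of a parameterized reduction are met, and W[1]-hardness transfers.

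The unit-$M_3$ strengthening then comes for free: since the source instances have unit processing times on $M_2$, the assignment $\bar p_j^{(3)}=p_j^{(2)}=1$ forces every $M_3$ processing time to equal one, so the produced family has unit $M_3$. There is no difficult computation here; the only point that genuinely deserves attention is verifying that \emph{adding} $\#p^{(1)}$ to the parameter does not destroy hardness. This holds precisely because the construction pins $\#p^{(1)}$ to the constant $1$, so the combined parameter collapses to (a constant plus) $\#d$ and the problem remains intractable against it — in sharp contrast to the two-machine case, where Theorem~\ref{theorem:three} shows the very same combination to be fixed-parameter tractable. Assembling these observations with the W[1]-hardness of the source family yields the stated claim.
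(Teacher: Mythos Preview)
Your proposal is correct and follows precisely the approach the paper uses: compose Theorem~\ref{theorem:one} with Lemma~\ref{lemma:Intract}, noting that the transformation fixes $\#\bar p^{(1)}=1$, preserves $\#d$, and (because the source instances have unit $M_2$ times) yields unit $M_3$ times. The paper's own proof is a one-line reference to these two results; you have merely spelled out the parameter bookkeeping that the paper leaves implicit.
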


The next theorem shows that the \textsc{$F_3||\sum_{J_j \in E} w_j$} problem is W[1]-hard as well when we combine $\#d$ and $\#w$.

\begin{theorem}\label{theorem:five}
  The \textsc{$F_3||\sum_{J_j \in E} w_j$} problem is W[1]-hard when parameterized by $\#d$+$\#w$.
\end{theorem}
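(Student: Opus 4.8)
The plan is to give a parameterized reduction from \textsc{$k$SUM}, reusing the overall skeleton of the construction in Theorem~\ref{theorem:one} but redistributing the information so that the number of distinct weights stays bounded. As before, the instance contains $k$ groups $\mathcal{J}_1,\ldots,\mathcal{J}_k$ of $h$ jobs each, where job $J_{ij}$ represents the choice of the element $x_j\in X$ for the $i$-th slot, and all jobs in $\mathcal{J}_i$ share a single due date; since at most one job per group can be JIT, a feasible JIT set picks one element per slot. The key difference is that every group job now receives the \emph{same} weight $w$, and we add only a constant number of gadget/blocker jobs, each carrying one common large weight (large enough to force their inclusion, as $J_{kh+1}$ was forced in Theorem~\ref{theorem:one}). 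Thus $\#w=O(1)$ and $\#d=k+O(1)$, both bounded by a function of $k$, so the map is a valid parameterized reduction; its target value is chosen so that meeting it forces all $k$ groups \emph{and} all gadget jobs to be scheduled in a JIT mode.

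In Theorem~\ref{theorem:one} the weights carried both a ``count'' term $T$ and a ``value'' term $x_j$, and the value term supplied the lower bound $\sum x_{[i]}\ge B$; with uniform weights this term is gone, so the value information must be recovered from the third machine. First I would keep the \emph{direct} encoding $p^{(1)}_{ij}=x_j$ together with a blocker on $M_1$ exactly as in Theorem~\ref{theorem:one}: scheduling the group jobs as early as possible (Lemma~\ref{lemma:sh}) packs them into $[0,B]$ before the blocker, yielding $\sum x_{[i]}\le B$. Second, I would add a \emph{complementary} encoding $p^{(3)}_{ij}=M-x_j$ on the third machine for a suitable constant $M$, together with a gadget that caps the total third-machine processing of the selected jobs. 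Since the last machine is processed in EDD order (Lemma~\ref{lemma:Choi2}), this cap becomes a bound $\sum_i (M-x_{[i]})\le kM-B$, i.e.\ $\sum x_{[i]}\ge B$; intersecting the two bounds pins $\sum x_{[i]}=B$. The second-machine processing times are set to a common filler value so that $M_2$ is never the binding resource and the two gadgets act independently.

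For correctness, in the forward direction a \textsc{$k$SUM} solution is turned into a feasible schedule by assigning its $k$ elements to the due-date slots in a convenient order (largest first), scheduling $M_1$ and $M_2$ by Lemmas~\ref{lemma:Choi1} and~\ref{lemma:sh} and completing each group job exactly at its due date on $M_3$; the spacing of the due dates is chosen so that such a schedule exists whenever the elements sum to $B$. Conversely, any schedule attaining the target value must make every group and every gadget JIT, and the $M_1$ and $M_3$ gadgets then force $\sum x_{[i]}\le B$ and $\sum x_{[i]}\ge B$, returning a \textsc{$k$SUM} solution. The main obstacle is the three-machine feasibility analysis: unlike the two-machine case of Theorem~\ref{theorem:one}, here the third machine imposes not only the aggregate cap but also the intermediate EDD constraints on every prefix of the selected jobs, and I must choose the constants $M$, the common weight, the filler $p^{(2)}$, and the due-date spacing so that only the intended aggregate constraint binds (the prefix constraints being absorbed by ordering the selected elements appropriately) while no spurious overlap or precedence violation arises across $M_1$, $M_2$, and $M_3$.
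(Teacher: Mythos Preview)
Your plan has a genuine gap in the third-machine gadget, and it is exactly the ``main obstacle'' you flag at the end. On the last machine, each JIT job occupies the interval $(d_i-p^{(3)}_{i[i]},\,d_i]$, and since the $k$ group due dates are fixed in advance, the EDD constraints on $M_3$ are \emph{per-slot} constraints $p^{(3)}_{i[i]}\le d_i-d_{i-1}$, not an aggregate one. Summing these does give $\sum_i(M-x_{[i]})\le d_k-d_0$, so choosing $d_k-d_0=kM-B$ yields $\sum x_{[i]}\ge B$ in the reverse direction. But in the forward direction you must show that \emph{every} $k$SUM solution can be packed into the fixed sub-intervals of widths $w_i=d_i-d_{i-1}$, and no choice of widths summing to $kM-B$ achieves this. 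Concretely, take $k=2$, $B=10$, and suppose both $\{1,9\}$ and $\{5,5\}$ are solutions: the first forces one width $\ge M-1$, hence the other $\le M-9$; the second forces both widths $\ge M-5$; these are incompatible. ``Ordering the selected elements appropriately'' cannot help, because the widths are part of the instance and must work for all solutions simultaneously. Since you also stipulate that $M_2$ is never binding, there is no other route to the lower bound $\sum x_{[i]}\ge B$.

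The fix the paper uses is to move the complementary encoding to $M_2$ rather than $M_3$: set $p^{(1)}_{ij}=x_j$, $p^{(2)}_{ij}=T-x_j$, and $p^{(3)}_{ij}=1$. On $M_1$ and $M_2$ completion times \emph{accumulate}, so a single final gadget job with $p^{(1)}=kT-B$ and $p^{(2)}=kT$ sees a start time on $M_2$ of at least $1+kT+\max\{\sum x_{[i]}-B,\;B-\sum x_{[i]}\}$, and forcing this to be at most $kT+1$ pins $\sum x_{[i]}=B$ in one stroke. Because $p^{(3)}\equiv 1$, the per-slot constraints on $M_3$ become trivial (consecutive integer due dates), which is precisely what your construction cannot arrange.
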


\begin{proof}
We prove that the \textsc{$F_3||\sum_{J_{j} \in E} w_j$} problem is W[1]-hard with respect to $\#d$+$\#w$
by providing a polynomial reduction from the \textsc{$k$SUM} problem (see Definition~\ref{definition:ksum}).
Given an instance of \textsc{$k$SUM} we construct an instance for the decision version of the $F_3 || \sum_{J_{j}\in E} w_{j}$ problem as follows.
The set $\mathcal{J}$ includes $n = kh+2$ jobs.
The first $kh$ jobs are the union of $k$ sets $\mathcal{J}_1, \mathcal{J}_2,\ldots,\mathcal{J}_k$,
where each set $\mathcal{J}_i$ ($i\in \{1,\ldots,k\}$) includes $h$ jobs.
Let $J_{ij}$ be the $j$th job in the set $\mathcal{J}_i$.
Moreover, for any job $J_{ij}$, let $d_{ij}$, $w_{ij}$, $p^{(1)}_{ij}$, and $p^{(2)}_{ij}$,
be the due date, the weight, and the processing times on machines $M_1$ and $M_2$, respectively.

For any job $J_{ij}$,
we set
($i$) $d_{ij}=kT+i+1$, where $T=\sum_{x_i\in X}x_i$;
($ii$) $w_{ij}=T$;
($iii$) $p^{(1)}_{ij}=x_j$;
($iv$) $p^{(2)}_{ij}=T-x_j$;
and
($v$) $p^{(3)}_{ij}=1$.
For job $J_{kh+1}$,
we set
($i$) $d_{kh+1}=B+2$;
($ii$) $w_{kh+1}=k^2(T+1)^2$;
($iii$) $p^{(1)}_{kh+1}=p^{(3)}_{kh+1}=1$;
and
($iv$) $p^{(2)}_{kh+1}=B$.
For job $J_{kh+2}$,
we set
($i$) $d_{kh+2}=2kT+2$;
($ii$) $w_{kh+2}=k^2(T+1)^2$;
($iii$) $p^{(1)}_{kh+2}=kT-B$;
($iv$) $p^{(2)}_{kh+2}=kT$;
and
($v$) $p^{(3)}_{kh+2}=1$.
In our decision version of the $F_m || \sum_{J_{j}\in E} w_{j}$ problem
we ask whether there is a feasible schedule (partition) with $\sum_{J_{j}\in E} w_{j}\geq kT+2k^2(T+1)^2$.
This finishes the description of the polynomial-time reduction
(a tabular representation of jobs created by the reduction is given in Table~\ref{table:reduction-2-general}).

\begin{table}[t]
\centering
\begin{tabular}{ c | c c c c c}
            & $p^{(1)}$    & $p^{(2)}$    & $p^{(3)}$     & $w$           & $d$             \\ \hline
$J_{11}$    & $x_1$        & $T - x_1$    & $1$           & $T$           & $kT + 2$        \\
$\vdots$    & $\vdots$     & $\vdots$     & $\vdots$      & $\vdots$      & $\vdots$        \\
$J_{1h}$    & $x_h$        & $T - x_h$    & $1$           & $T$           & $kT + 2$        \\ \hline
$\vdots$    & $\vdots$     & $\vdots$     & $\vdots$      & $\vdots$      & $\vdots$        \\ \hline
$J_{k1}$    & $x_1$        & $T - x_1$    & $1$           & $T$           & $kT + k + 1$    \\
$\vdots$    & $\vdots$     & $\vdots$     & $\vdots$      & $\vdots$      & $\vdots$        \\
$J_{kh}$    & $x_h$        & $T - x_h$    & $1$           & $T$           & $kT + k + 1$    \\ \hline
$J_{kh+1}$  & $1$          & $B$          & $1$           & $k^2(T+1)^2$  & $B + 2$         \\ \hline
$J_{kh+2}$  & $kT - B$     & $kT$         & $1$           & $k^2(T+1)^2$  & $2kT + 2$
\end{tabular}
\caption{jobs created in the reduction described in the proof of Theorem~\ref{theorem:five}.}
\label{table:reduction-2-general}
\end{table}

Notice that,
since all jobs in $\mathcal{J}_i$ ($i=1,\ldots,k$) share the same due date of $kT+i+1$,
it follows that,
in any feasible schedule,
at most one of them can be scheduled in a JIT mode.
Therefore, we have that
\begin{equation}\label{eqn: yy1}
  |E|\leq k+2.
\end{equation}

Let us begin by proving that,
if we have a yes-instance for the \textsc{$k$SUM},
then there exists a feasible schedule for the constructed instance of the $F_3 || \sum_{J_{j}\in E} w_{j}$ problem
with $\sum_{J_{j}\in E} w_{j}\geq kT+2k^2(T+1)^2$.
The fact that we have a yes-instance to \textsc{$k$SUM} implies that there exists a subset of $k$ elements, \linebreak
$S = \{x_{[1]},x_{[2]},\ldots,x_{[k]}\} \subseteq X$,
such that $\sum_{j=1}^k x_{[j]} = B$, where $[j]$ is the index of the $j$th element in $S$.

We construct the following schedule for the constructed instance of our scheduling problem.
We set $E=\{J_{1[1]},J_{2[2]},\ldots,J_{k[k]}\} \cup \{J_{kh+1}\} \cup \{J_{kh+2}\}$ and $T=\mathcal{J} \backslash E$.
Moreover, we maintain the same processing order on the entire set of machines;
specifically, $\sigma_1=\sigma_2=\sigma_3= \{J_{kh+1}, J_{1[1]},J_{2[2]},\ldots,J_{k[k]},J_{kh+2}\}$.
Following this processing order,
we schedule job $J_{kh+1}$ during the time interval $(0,1]$ on $M_1$;
during the time interval $(1,B+1]$ on $M_2$;
and during the time interval $(B+1,B+2]$ on $M_3$.
Then, for $i=1,\ldots,k$,
we schedule job $J_{i[i]}$ during the time interval $(1+ \sum_{j=1}^{i-1} x_{[j]}, 1+ \sum_{j=1}^{i} x_{[j]}]$ on $M_1$;
during the time interval $(B+1+(i-1)T-\sum_{j=1}^{i-1} x_{[j]}, B+1+iT-\sum_{j=1}^{i} x_{[j]}]$ on $M_2$;
and during the time interval $(kT+i,kT+i+1]$ on $M_3$.
Finally,
we schedule job $J_{kh+2}$ during the time interval $(B+1,kT+1]$ on $M_1$;
during the time interval $(kT+1,2kT+1]$ on $M_2$;
and during the time interval $(2kT+1,2kT+2]$ on $M_3$.
The constructed schedule is illustrated in Figure~\ref{figure:reduction-2}.

\begin{figure}[t]
  \centering
  \includegraphics[scale=0.31]{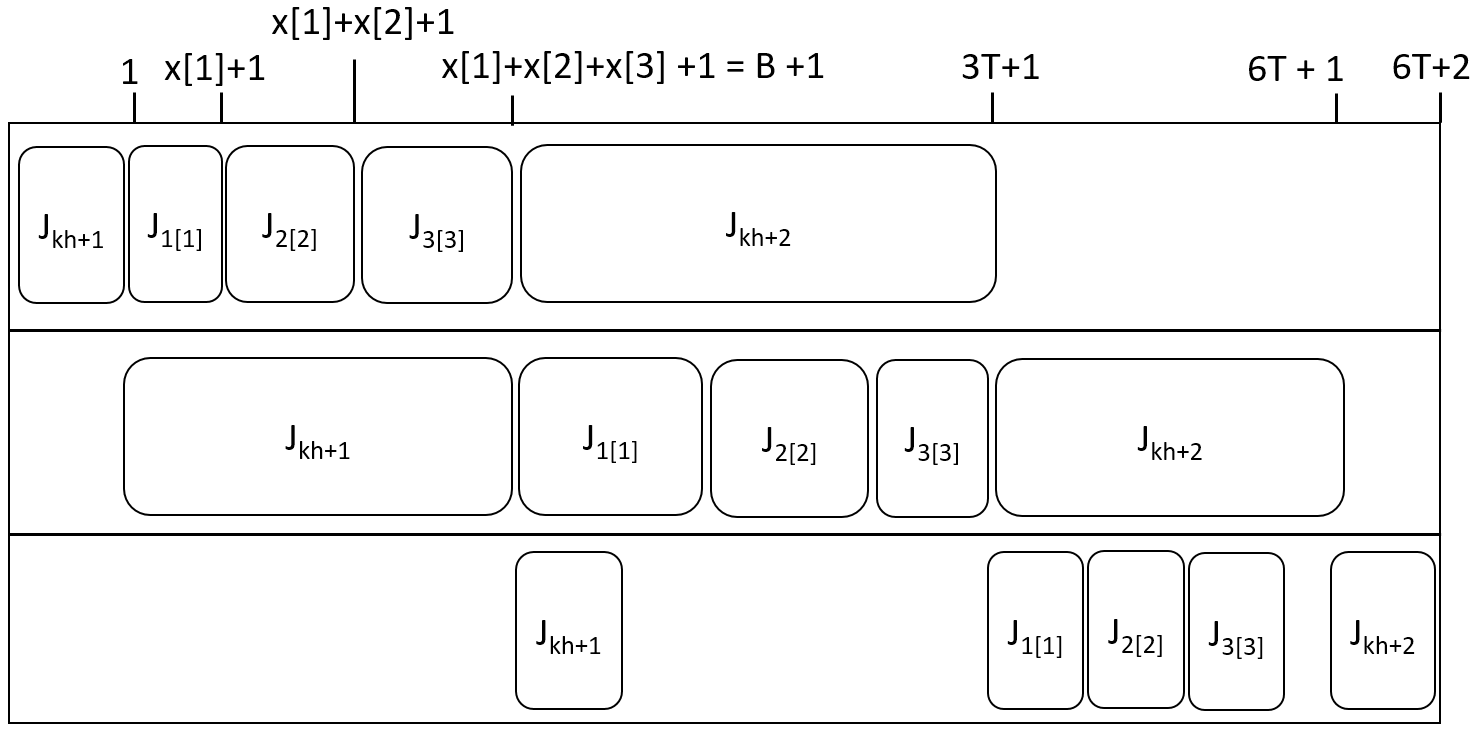}
\caption{An optimal schedule for a reduced instance, as described in Theorem~\ref{theorem:five}, for $k = 3$.}
\label{figure:reduction-2}
\end{figure}

We next prove that the constructed schedule is feasible (with all jobs in the set $E$ being scheduled in a JIT mode).
To this end,
we shall prove that
(a) there is no overlap between processing operations on each one of the three machines;
and (b) there is no overlap between processing operations of each of jobs on the different machines.

Let us first prove (a).
The fact that job $J_{1[1]}$ starts its processing on $M_1$ right after the completion of job $J_{kh+1}$ at time $1$;
that job $J_{i+1[i+1]}$ starts its processing on $M_1$ right after the completion of job $J_{i[i]}$ at time $1+ \sum_{j=1}^{i} x_{[j]}$ for $i=1,\ldots,k-1$;
and that job $J_{kh+2}$ starts its processing on $M_1$ right after the completion of job $J_{k[k]}$ at time $1+ \sum_{j=1}^{k} x_{[k]}=1+B$,
implies that there is no overlap of processing operations on $M_1$.
The fact that job $J_{1[1]}$ starts its processing on $M_2$ right after the completion of job $J_{kh+1}$ at time $B+1$;
that job $J_{i+1[i+1]}$ starts its processing on $M_2$ right after the completion of job $J_{i[i]}$ at time $B+1+iT-\sum_{j=1}^{i} x_{[j]}$ for $i=1,\ldots,k-1$;
and that job $J_{kh+2}$ starts its processing on $M_2$ right after the completion of job $J_{k[k]}$ at time $B+1+kT-\sum_{j=1}^{k} x_{[k]}=kT+1$,
implies that there is no overlap of processing operations on $M_2$.
Then,
the fact that there is no overlap between processing operations on $M_3$ follows from the fact that
($i$) the start time of $J_{1[1]}$ on $M_3$ is at time $kT+1$ which is not earlier than the completion time of $J_{kh+1}$ on $M_3$ at time $B+2$ (as $T > B$);
($ii$) job $J_{i+1[i+1]}$ starts its processing on $M_3$ right after the completion of job $J_{i[i]}$ at time $Tk+i+1$ for $i=1,\ldots,k-1$;
and
($iii$) job $J_{kh+2}$ starts its processing on $M_3$ at time $2kT+1$ which is later than the completion time of $J_{k[k]}$ on $M_3$ at time $kT+k+1$ (as $T > k$).

We move on to prove (b).
This claim follows from the fact that
($i$) the processing of job $J_{kh+1}$ starts at $M_i$ exactly after its completion on $M_{i-1}$ for $i=2,3$;
($ii$) the processing of each of job in the set $\{J_{1[1]},J_{2[2]},\ldots,J_{k[k]}\}$
starts at $M_i$ only after the entire set is completed on $M_{i-1}$ for $i=2,3$;
and
($iii$) the processing of job $J_{kh+2}$ starts at $M_i$ exactly after its completion on $M_{i-1}$ for $i=2,3$.

The fact that the schedule is feasible and and all jobs in the set $E$ are completed in a JIT mode implies that
\begin{equation}\label{eqn: y1}
  \sum_{J_{j}\in E} w_{j}= \sum_{j=1}^k w_{j[j]}+ w_{kh+1}+w_{kh+2}= kT+2k^2(T+1)^2;
\end{equation}
thus, we have a yes answer for the constructed scheduling instance.

Now we prove that,
if we have a no-instance for the \textsc{$k$SUM},
then a feasible schedule for the constructed instance of the $F_3 || \sum_{J_{j}\in E} w_{j}$ problem
with $\sum_{J_{j}\in E} w_{j}\geq kT+2k^2(T+1)^2$ does not exist.
By contradiction, assume that a feasible partition (schedule) $\tau = E\cup T$ with $\sum_{J_{j}\in E} w_{j}\geq kT+2k^2(T+1)^2$ exists.
The fact that at most a single job from each $\mathcal{J}_i$ can be scheduled in a JIT mode in any feasible schedule
implies that the total weight of the subset of JIT jobs among all jobs in $\cup_{i=1}^{k} \mathcal{J}_i$ is at most $kT$.
Therefore, both ${J_{kh+1}}$ and ${J_{kh+2}}$ are included in the set $E$
(as otherwise, we would have that $\sum_{J_{j}\in E} w_{j} \leq kT+k^2(T+1)^2< kT+2k^2(T+1)^2$,
contradicting our assumption that $\sum_{J_{j}\in E} w_{j}\geq kT+2k^2(T+1)^2$).

Let $E'=E \setminus \{J_{kh+1},J_{kh+2}\}$.
Since $\sum_{J_{j}\in E} w_{j}\geq kT+2k^2(T+1)^2$ and $\{J_{kh+1},J_{kh+2}\}\in E$, we conclude that
\begin{equation}\label{eqn: aa1}
  \sum_{J_{j}\in E'} w_{j} = \sum_{J_{j}\in E} w_{j} - w_{kh+1}-w_{kh+2} \geq kT.
\end{equation}

Based on eq.~(\ref{eqn: yy1}), eq.~(~\ref{eqn: aa1}), and the fact that $|E'|=|E-2|$, we conclude that $|E'|=k$.

The fact that ${J_{kh+1}}\in E$ implies that it has to be scheduled during the time interval $(0,1]$ on $M_1$;
during the time interval $(1,B+1]$ on $M_2$;
and during the time interval $(B+1,B+2]$ on $M_3$.
Next we prove that all the $k$ jobs in $E'$ shall be scheduled before $J_{kh+2}$ on ($i$) $M_2$ and ($ii$) $M_1$.
Let us first prove ($i$).
By contradiction, assume that one of jobs in $E'$ (say job $J_{ij}$) is scheduled after $J_{kh+2}$ on $M_2$.
The fact that we have to schedule $J_{kh+1}$ during the time interval $(1,B+1]$ on $M_2$
and that $p^{(2)}_{kh+2}=kT$ implies that job $J_{ij}$ will start its processing on $M_2$ not earlier than on time $kT+B+1$,
and thus will be completed later than its due date on $M_3$,
contradicting its containment in $E'$.
Let us now prove ($ii$).
By contradiction,
assume that one of jobs in $E'$ (say job $J_{ij}$) is scheduled after $J_{kh+2}$ on $M_1$.
The fact that we have to schedule $J_{kh+1}$ during the time interval $(0,1]$ on $M_1$
and that $p^{(1)}_{kh+2}=kT-B$
implies that job $J_{ij}$ will complete its processing on $M_1$ not earlier than on time $kT-B+x_i+1$,
and thus will be completed on $M_2$ not earlier than on time $(k+1)T-B+1$.
Due to ($i$) above, this further implies that $J_{kh+2}$ will be completed on $M_2$ not earlier than at time $(2k+1)T-B+1$
and on $M_3$ not earlier than at time $(2k+1)T-B+2>2kT+2=d_{kh+2}$,
contradicting the fact that $J_{kh+2}$ is an early job.

The fact that $J_{kh+2}\in E$ implies that $J_{kh+2}$ has to start not later than on time $B+1$ on $M_1$ and not later than on time $kT+1$ on $M_2$.
Based on ($i$),
we conclude that the start time of $J_{kh+2}$ on $M_2$ is not earlier than at time $p^{(1)}_{kh+1}+\sum_{J_{ij}\in E'} p^{(1)}_{ij}=1+\sum_{J_{ij}\in E'} x_{i}$. Therefore,
we have that $1+\sum_{J_{ij}\in E'} x_{i}\leq B+1$,
i.e.,
that $\sum_{J_{ij}\in E'} x_{i}\leq B$.
Based on ($i$) and ($ii$),
we also have that the start time of $J_{kh+2}$ on $M_1$ is not earlier than at time
$p^{(1)}_{kh+1}+ \max \{\sum_{J_{ij}\in E'} p^{(1)}_{ij} + p^{(1)}_{kh+2}, p^{(2)}_{kh+1}+ \sum_{J_{ij}\in E'} p^{(2)}_{ij}\} =
1+kT+ \max \{\sum_{J_{ij}\in E'} x_i -B, B - \sum_{J_{ij}\in E'} x_i \}$.
Therefore, we have that $1 + kT + \max \{\sum_{J_{ij}\in E'} x_i - B, B - \sum_{J_{ij}\in E'} x_i \} \leq kT + 1$,
which implies that $\sum_{J_{ij}\in E'} x_i = B$.
Thus, by setting $S=\{x_j|J_{ij}\in E'\}$, we can obtain a solution for the \textsc{$k$SUM} with $|S|=k$ and $\sum_{x_i\in S}x_i=B$.
\end{proof}

\section{Summary and Future Research}

In this paper we provide a parameterized analysis of the NP-hard JIT flow-shop scheduling problem on two and three machines.
The main parameter being studied is the number of different due dates.
We prove that the problem is intractable in the parameterized sense,
even when the scheduling is done on two machines,
and belongs to XP when the scheduling is done on any fixed number of $m\geq 2$ machines.

Then, we show that,
when combining the number of different due dates with either the number of different processing times on the first machine,
or the number of different weights,
the problem becomes fixed-parameter tractable when the scheduling is done on two machines.
It remains W[1]-hard,
however,
when the scheduling is done on three machines
(our results are summarized in Table~\ref{table:results}).

One immediate direction for future research is to focus on other parameters,
and to study the parameterized complexity of the JIT flow-shop problem with respect to those parameters.
Finally,
one can move to other machine environments such as unrelated machines, job-shop, and open-shop.

\section*{Acknowledgments}

This research was partially supported by Grant No. 2016049 from the United States-Israel Binational Science Foundation (BSF).

\bibliographystyle{plain}
\bibliography{bib}

\end{document}